\newcommand{\Rmnum}[1]{\expandafter\@slowromancap\romannumeral #1@}
\theoremstyle{plain}
\newtheorem{theorem}{\hspace{1em}Theorem}[section]
\newtheorem{corollary}[theorem]{\hspace{1em}Corollary}
\newtheorem{lemma}[theorem]{\hspace{1em}Lemma}
\newtheorem{remark}[theorem]{\hspace{1em}Remark}
\newtheorem{definition}[theorem]{\hspace{1em}Definition}
\newtheorem{proposition}[theorem]{\hspace{1em}Proposition}
\newtheorem{example}[theorem]{\hspace{1em}Example}
\def\F{\mathbb{F}}
\newcommand{\Tr}{\mathrm{Tr}}
\def\NL{\mathrm{NL}}
\def\nl{\mathrm{nl}}
\def\w{\mathrm{w}}
\begin{document}

\parindent 13pt
\baselineskip 13pt

\title{Weightwise perfectly balanced
functions with high weightwise nonlinearity profile
}

\date{}

\author{Jian Liu\thanks{School of Computer Software, Tianjin University, Tianjin
300350, P. R. China and CNRS, UMR 7539
LAGA, Paris, France, email: jianliu.nk@gmail.com}
\and Sihem Mesnager\thanks{Department of Mathematics, University of Paris VIII, University of Paris XIII, CNRS, UMR 7539
LAGA and Telecom ParisTech, Paris, France, email: smesnager@univ-paris8.fr}
}

\maketitle

\begin{abstract}
Boolean functions with good cryptographic criteria when restricted to the set of vectors with constant Hamming weight play an important role in the recent FLIP stream cipher~\cite{Meaux2016}. In this paper, we propose a large class of weightwise perfectly balanced (WPB) functions, which is not extended affinely (EA) equivalent to the known constructions. We also discuss the weightwise nonlinearity profile of these functions, and present general lower bounds on $k$-weightwise nonlinearity, where $k$ is a power of $2$. Moreover, we exhibit a subclass of the family. By a recursive lower bound, we show that these subclass of WPB functions have very high weightwise nonlinearity profile.
\end{abstract}

{\bf Keywords}: FLIP cipher; Boolean function; weightwise perfectly balance; weightwise nonlinearity

\section{Introduction}\label{Sec: Introduction}

Boolean functions used as primitives in stream ciphers and block ciphers are classically studied with input defined on the whole vector space $\F_2^n$. At Eurocrypt 2016, M\'{e}aux et al.\cite{Meaux2016} proposed a new family of stream ciphers, called FLIP, which is intended for combining with an homomorphic encryption scheme to create an acceptable system of fully
homomorphic encryption.
The symmetric primitive FLIP requires the Hamming weight of the key register to be invariant. This produces a special situation for the structure of filter function: the input of the filter function consists of those vectors in $\F_2^n$ which have constant Hamming weight.
Then, it leads to the problem on how to evaluate the security of a Boolean function with restricted input, i.e., the input of $f$ is a subset of $\F_2^n$.
Besides, in particular stream ciphers, knowing the Hamming weight of a key register enables the attacker to distinguish the keystream from a random bit-stream\cite{JD06}. Therefore, filter functions which have small bias  when restricted to constant weight vectors are preferred.


Early studies on Boolean functions with input restricted to constant weight vectors can be found in~\cite{FilmusCJTCS2016,FilmusEJB2016,FKMW16,FM16}. Their work is asymptotical and from a probability point of view. In 2017, Carlet, M\'{e}aux, and Rotella~\cite{Carlet2016} provided a security analysis on FLIP cipher and gave the first study on cryptographic criteria of Boolean functions with restricted input. An early version of FLIP faces an attack given by Duval et al.~\cite{Duval2016}, which leads the design of the filter function more complicated to reach better criteria on the subsets of $\F_2^n$.
For Boolean functions, the parameters balancedness and nonlinearity are strongly related to the resistance against distinguish attack and affine approximation attack respectively. In~\cite{Carlet2016}, it is shown that, for Boolean functions with restricted input, balancedness and nonlinearity continue to play an important role with respect to the corresponding attacks in the framework of FLIP ciphers.
In particular, Boolean functions which are uniformly distributed over $\{0,1\}$ on $E_{n,k}=\{x\in\F_2^n\mid \mathrm{w_H(x)=k}\}$ for every $1<k<n$ are called \emph{weightwise perfectly balanced} (WPB) functions, where $\mathrm{w_H}(x)$ denotes the Hamming weight of $x$.
The minimum Hamming distance between a Boolean function $f$ and all the affine Boolean functions is called the nonlinearity of $f$. If the input of $f$ is restricted to $E_{n,k}$, then the nonlinearity is called \emph{$k$-weight nonlinearity}. The set of $k$-weight nonlinearity for all $k>1$ is called the \emph{weightwise nonlinearity profile} of  $f$.
The only known construction of WPB functions is due to~\cite{Carlet2016}, which is designed through a recursive method.
Some upper bounds on the $k$-weight nonlinearity of Boolean functions are discussed in~\cite{Carlet2016} and~\cite{MesnagerFq2017} respectively.
As far as we know, there is no known
construction of WPB functions which has high weightwise nonlinearity profile simultaneously.

In this paper, we focus on constructions of WPB functions. We first propose a large family of WPB functions by presenting the trace form as well as the algebraic normal form. Compared with the construction given by Carlet et al.~\cite{Carlet2016}, our family has larger algebraic degree and thus not EA equivalent to the known ones. Then, we discuss the weightwise nonlinearity of these WPB functions, showing that for every $k$ being a positive power of $2$, the $k$-weightwise nonlinearity of any WPB function in our family is nonzero. Furthermore, we construct a subclass of WPB functions in our family, which have high $k$-weightwise nonlinearity for every $k>1$. This is the first time that an infinite class of WPB functions with high weightwise nonlinearity profile has been invented.

The remainder of this paper is organized as follows. Formal definitions and necessary preliminaries
are introduced in Section~\ref{Sec: Preliminaries}. In Section~\ref{Sec: main.con.}, a family of WPB functions is proposed, and the analysis of the weightwise nonlinearity is presented. We exhibit a subclass of WPB functions with high weightwise nonlinearity profile in Section~\ref{Sec: A primary con.}. Finally, we conclude the paper in the last section.


\section{Preliminaries}\label{Sec: Preliminaries}

In this paper, additions and multiple sums calculated modulo $2$ will be denoted by $\oplus$ and $\bigoplus_{i}$ respectively,  additions and multiple sums calculated in characteristic $0$ or in the
additions of elements of the finite field $\F_{2^n}$ will be denoted by $+$ and $\sum_{i}$ respectively.
Let $\F_2^n$ denote the $n$-dimensional vector space over the finite field $\F_2$ with two elements.
An $n$-variable Boolean function $f$ is a function from $\F_2^n$ to $\F_2$.
The $(0,1)$-sequence defined by $(f(\mathbf{v}_0),f(\mathbf{v}_1),\ldots,f(\mathbf{v}_{2^n-1}))$ is called the \emph{truth table} of $f$, where $\mathbf{v}_0=(0,\ldots,0,0), \mathbf{v}_1=(0,\ldots,0,1), \ldots,\mathbf{v}_{2^n-1}=(1,\ldots,1,1)$ are ordered by lexicographical order.
$f$ can be uniquely represented in the \emph{algebraic
normal form} (in brief, ANF) that
\[f(x)=\bigoplus_{v\in\F_2^n}a_{v}x_1^{v_1}x_2^{v_2}\cdots x_n^{v_n},
\]
where
$x=(x_1,\ldots,x_n)\in \F_2^n$, $v=(v_1,\ldots,v_n)\in \F_2^n$,
$a_{v}\in \F_2$.
The \emph{algebraic degree} of $f$, denoted by $\deg(f)$,
is the number of variables in the highest order product term with nonzero coefficient.
A Boolean function is said to be \emph{affine} if $\deg(f)\leqslant 1$.
Two Boolean functions $f$ and $g$ are said to be \emph{extended affinely (EA)}
equivalent if there exist an affine Boolean function $l$ and an affine automorphism $L$ of $\F_2^n$ such that $f=g\circ L \oplus l$.
The algebraic degree of an $n$-variable Boolean function $f$ is affine invariant, i.e., for every affine Boolean function $l$ and every affine automorphism $L$, we have $\deg(f\circ L\oplus l)=\deg(f)$ (see~\cite{Carlet1}).
Given a basis of $\F_{2^n}$ over $\F_2$, $\F_{2^n}$ can be regarded as a vector space over $\F_2$, and there is a bijective $\F_2$-linear mapping from $\F_{2^n}$ to $\F_2^n$. Thus, the field $\F_{2^n}$ can be identified with $\F_2^n$.

Recall that the cyclotomic classes of $2$ modulo $2^n-1$ is defined as $C(j):=\{j2^i \mod (2^n-1)\mid i=0,1,\ldots,o(j)\}$, where $o(j)$ is the smallest positive integer such that $j2^{o(j)}\equiv j~(\mathrm{mod}~(2^n-1))$.
For any positive integers $k$ and $r$ satisfying $r|k$, the trace function from $\F_{2^k}$ to $\F_{2^r}$, denoted by $\Tr_r^k$, is defined as
\[
\Tr_r^k(x):=x+x^{2^r}+x^{2^{2r}}+\cdots+x^{2^{k-r}}, ~~~~x\in\F_{2^k}.
\]
Through the choice of a basis of the vector space $\F_{2^n}$, a Boolean function over $\F_{2^n}$ can be uniquely represented in the following \emph{trace form}~\cite{Carlet2}:
\[
f(x)=\sum_{j\in\Gamma_n}\Tr_1^{o(j)}(a_jx^j)+\epsilon\left(1+x^{2^n-1}\right),
\]
where $\Gamma_n$ is the set of all the coset leaders of the cyclotomic classes of $2$ modulo $2^n-1$,
$o(j)$ is the size of the cyclotomic class of $2$ modulo $2^n-1$ containing $j$,
$a_j\in\F_{2^{o(j)}}$,
$\epsilon=w_{\mathrm{H}}(f) \mod 2$, and $w_{\mathrm{H}}(f)=|\{x\in\F_{2^n} \mid f(x)=1\}|$.
The algebraic degree of $f$ in the above trace form is preserved, which can be read as $\deg(f)=\max\{\mathrm{wt}_2(j), a_j\ne 0\}$ (we make $\epsilon=a_{2^n-1}$), where $\mathrm{wt}_2(j)$ is the number of nonzero coefficients $j_s$ in the binary expansion $\sum_{s=0}^{n-1}j_s 2^s$ of $j$.

Denote by $\w_{\mathrm{H}}(f)_k$ the Hamming weight of a Boolean function $f$ on all the entries with fixed Hamming weight $k$, i.e.,
\[
\w_{\mathrm{H}}(f)_k=\{x\in\F_2^n, \w_{\mathrm{H}}(x)=k, f(x)=1\},
\]
where $\w_{\mathrm{H}}$ denotes the Hamming weight of a vector.

\begin{definition}${}^\text{\!\!\!\!\emph{\cite{Carlet2016}}}$\label{Def.WPB}
     For an $n$-variable Boolean function $f$, $f$ is called \emph{weightwise perfectly balanced (WPB)} if for every $k\in\{1,\ldots,n-1\}$, the restriction of $f$ on $E_{n,k}=\{x\in\F_2^n, \w_{\mathrm{H}}(x)=k\}$ is balanced, i.e., $\w_{\mathrm{H}}(f)_k={n \choose k}/2$.
\end{definition}

It is proved that weightwise perfectly balanced Boolean functions exist only if $n$ is a power of $2$ (see~\cite{Carlet2016}). In this paper, we always consider Boolean functions with $n=2^k$ variables, where $k$ is a positive integer.

\begin{remark}
    For a WPB function $f$, it is imposed that $f(0,\ldots,0)\ne f(1,\ldots,1)$ to make the whole function balanced on $\F_2^n$. Without loss of generality, we suppose that $f(0,\ldots,0)=0$ and $f(1,\ldots,1)=1$.
\end{remark}

Let $E$ be a subset of $\F_{2}^n$ and $f$ be a Boolean function restricted on $E$. The \emph{nonlinearity} of $f$ over $E$, denoted by $\NL_E(f)$, is the minimum Hamming distance between $f$ and all the affine functions restricted to $E$. In particular, the set $\{\NL_{E_{n,k}}(f), k=0,\ldots,n\}$ is called the \emph{weightwise nonlinearity profile} of $f$, where $E_{n,k}=\{x\in\F_2^n, \w_{\mathrm{H}}(x)=k\}$. The value $\NL_{E_{n,k}}(f)$ is  called the \emph{$k$-weight nonlinearity} of $f$, and will be denoted by $\NL_k(f)$ if there is no risk of confusion. The nonlinearity of $f$ over a subset can be calculated as follows.

\begin{proposition}${}^\text{\!\!\!\!\emph{\cite{Carlet2016}}}$\label{prop.restricted nonlineariy}
    Let $f$ be an $n$-variable Boolean function and $E$ be a subset of $\F_2^n$. We have
    \[
    \NL_E(f)=\frac{|E|}{2}-\frac{1}{2}\max_{a\in\F_2^n}\Big|\sum_{x\in E}(-1)^{f(x)\oplus a\cdot x}\Big|,
    \]
    where $a\cdot x$ is the usual inner product defined as $a\cdot x=a_1x_1\oplus\cdots \oplus a_nx_n$ for $a=(a_1,\ldots, a_n)\in\F_2^n$ and $x=(x_1,\ldots,x_n)\in\F_2^n$.
\end{proposition}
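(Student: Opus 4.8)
The plan is to reduce the minimization over all affine functions restricted to $E$ to the evaluation of a single signed exponential sum, via the standard correspondence between Hamming distance and correlation. First I would fix an arbitrary affine function and write it, restricted to $E$, as $\ell(x)=a\cdot x\oplus b$ with $a\in\F_2^n$ and $b\in\F_2$. The Hamming distance $d_E(f,\ell)=|\{x\in E:f(x)\neq\ell(x)\}|$ can then be read off from a sum of signs: each $x$ with $f(x)=\ell(x)$ contributes $+1$ to $\sum_{x\in E}(-1)^{f(x)\oplus\ell(x)}$ and each $x$ with $f(x)\neq\ell(x)$ contributes $-1$, so that
\[
\sum_{x\in E}(-1)^{f(x)\oplus\ell(x)}=|E|-2\,d_E(f,\ell),
\]
equivalently $d_E(f,\ell)=\frac{|E|}{2}-\frac{1}{2}\sum_{x\in E}(-1)^{f(x)\oplus\ell(x)}$.

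Next I would minimize over all affine $\ell$, which amounts to maximizing the correlation sum. The crucial step is disposing of the constant term $b$: since $(-1)^{f(x)\oplus a\cdot x\oplus b}=(-1)^b(-1)^{f(x)\oplus a\cdot x}$, toggling $b$ simply negates the whole sum, so for each fixed $a$ the optimal choice of $b$ turns the sum into its absolute value $\big|\sum_{x\in E}(-1)^{f(x)\oplus a\cdot x}\big|$. Maximizing over $a\in\F_2^n$ then gives
\[
\max_{\ell\ \mathrm{affine}}\sum_{x\in E}(-1)^{f(x)\oplus\ell(x)}=\max_{a\in\F_2^n}\Big|\sum_{x\in E}(-1)^{f(x)\oplus a\cdot x}\Big|,
\]
and substituting this back into the expression for $d_E(f,\ell)$ yields the claimed formula for $\NL_E(f)$.

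I expect no genuine obstacle here: the argument is purely combinatorial and uses nothing about the structure of $E$. The one point that merits care is the treatment of the affine constant $b$, which is exactly what converts a signed maximum into the maximum of absolute values appearing in the statement. It is also worth remarking that, precisely because $E$ is arbitrary, the identity applies verbatim to each weight slice $E_{n,k}$ and hence governs the whole weightwise nonlinearity profile.
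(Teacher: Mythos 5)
Your proof is correct and complete: the identity $\sum_{x\in E}(-1)^{f(x)\oplus\ell(x)}=|E|-2\,d_E(f,\ell)$, followed by absorbing the affine constant $b$ to convert the signed maximum into the maximum of absolute values, is exactly the standard argument, and it rightly uses nothing about the structure of $E$. Note that the paper itself states this proposition without proof, citing \cite{Carlet2016}; your derivation matches the one given in that reference, so there is nothing to add.
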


In~\cite{Carlet2016}, an upper bound on the nonlinearity of Boolean functions with restricted input is given.
Mesnager~\cite{MesnagerFq2017} presented further advances on this upper bound, and stressed that the improved upper bound might be much lower
than that in~\cite{Carlet2016}.

\begin{proposition}${}^\text{\!\!\!\!\emph{\cite{Carlet2016}}}$\label{prop.upper bound on the restricted nonlineariy}
    Let $f$ be an $n$-variable Boolean function, and $\lfloor a\rfloor$ denote the maximum integer not larger than $a$. Then, for every $E\subseteq F_{2}^n$, we have
    \[
    \NL_E(f)\leqslant \left\lfloor\frac{|E|}{2}-\frac{\sqrt{|E|}}{2}\right\rfloor.
    \]
\end{proposition}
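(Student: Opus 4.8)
The plan is to start from the exact formula for restricted nonlinearity in Proposition~\ref{prop.restricted nonlineariy}. Writing $W(a):=\sum_{x\in E}(-1)^{f(x)\oplus a\cdot x}$ for $a\in\F_2^n$, that proposition reads $\NL_E(f)=\frac{|E|}{2}-\frac{1}{2}\max_{a\in\F_2^n}|W(a)|$, so the whole statement reduces to the single lower bound $\max_{a\in\F_2^n}|W(a)|\geqslant\sqrt{|E|}$. Once this is in hand, the floor appears for free, because $\NL_E(f)$ is a minimum Hamming distance and hence a nonnegative integer.

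To produce that lower bound I would run an averaging (Parseval-type) argument over all $a$. The key step is to expand the square and interchange the order of summation:
\[
\sum_{a\in\F_2^n}W(a)^2=\sum_{x\in E}\sum_{y\in E}(-1)^{f(x)\oplus f(y)}\sum_{a\in\F_2^n}(-1)^{a\cdot(x\oplus y)}.
\]
The inner sum is the standard orthogonality relation for the additive characters of $\F_2^n$: it equals $2^n$ when $x=y$ and $0$ otherwise. Hence only the diagonal terms $x=y$ survive, each contributing $2^n$, and therefore $\sum_{a\in\F_2^n}W(a)^2=2^n|E|$.

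Since this sum runs over exactly $2^n$ values of $a$, the average of $W(a)^2$ equals $|E|$, so at least one $a$ satisfies $W(a)^2\geqslant|E|$, i.e. $\max_{a\in\F_2^n}|W(a)|\geqslant\sqrt{|E|}$. Substituting into the formula from Proposition~\ref{prop.restricted nonlineariy} yields $\NL_E(f)\leqslant\frac{|E|}{2}-\frac{\sqrt{|E|}}{2}$, and taking the integer part completes the argument.

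I do not expect a genuine obstacle: the entire content sits in the orthogonality of characters together with a one-line pigeonhole estimate. The only points to state carefully are that the maximum is taken over the same index set $a\in\F_2^n$ appearing in Proposition~\ref{prop.restricted nonlineariy}, so that the count $2^n$ used in the averaging matches the number of competing linear functionals, and that the final floor is legitimate precisely because $\NL_E(f)$ is integer-valued.
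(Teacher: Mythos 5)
Your proof is correct: the character-orthogonality computation $\sum_{a\in\F_2^n}W(a)^2=2^n|E|$, the pigeonhole step giving some $|W(a)|\geqslant\sqrt{|E|}$, and the appeal to integrality of $\NL_E(f)$ to justify the floor are all sound. The paper itself states this proposition without proof, citing \cite{Carlet2016}, and your Parseval-type averaging argument is precisely the standard one used there, so there is nothing to flag.
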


\section{A family of WPB functions}\label{Sec: main.con.}

In this section, we propose a large class of WPB functions, which are not EA equivalent to the functions given by Carlet et al.~\cite{Carlet2016}. For a finite field $\F_{2^n}$, we choose a normal basis $\{\alpha, \alpha^2,\ldots,\alpha^{2^{n-1}}\}$ of $\F_{2^n}$,
and decompose $x\in\F_{2^n}$ over this basis. Thus, if $x=(x_1,\ldots,x_n)$ then $x^2=(x_2,\ldots,x_n,x_1)$, which is a left shift of $x$.
Recall that we always assume $n$ is a power of $2$.

For $0\leqslant s\leqslant n-1$, define the \emph{left $s$-cyclic shift operator} $\rho_n^s$ as
$\rho_n^s(x_i)=x_{(i+s) \mathrm{mod}~n}$, where $x_i\in\F_2$, $1\leqslant i\leqslant n$.
For tuples, we define
$\rho_n^s(x_1,x_2,\ldots, x_n)=\left(\rho_n^s(x_{1}),\rho_n^s(x_{2}),\ldots,\rho_n^s(x_n)\right)$,
and for monomials, we define $\rho_n^s(x_{i_1}x_{i_2}\cdots x_{i_m})=\rho_n^s(x_{i_1}) \rho_n^s(x_{i_2})  \cdots \rho_n^s(x_{i_m})$, where $1\leqslant i_1<i_2<\cdots<i_m\leqslant n$.
An orbit generated by $x$ is defined as $G_k^{(l)}(x)=\{x,\rho_n^1(x),\ldots,\rho_n^{l-1}(x)\}$, where $\mathrm{w_H}(x)=k$ and the length $l$ satisfies $\rho_n^{l}(x)=x$.
Every orbit can be represented by its lexicographically first element, called the \emph{representative element}.
The set of all the representative elements with Hamming weight $k$ and orbit length $l$ is denoted by $\Omega_k^{(l)}$.
For every $E_{n,k}=\{x\in\F_2^n, \w_{\mathrm{H}}(x)=k\}$,
\begin{equation}\label{eq.union of orbits}
    E_{n,k}=\bigcup_{l|n}\bigcup_{x\in\Omega_k^{(l)}}G_k^{(l)}(x).
\end{equation}
Clearly, all the orbits generate a partition of the set $\F_2^n$.
It is proved that (see e.g.~\cite[Appendix A.1]{Daemen1995}) the number of distinct orbits in $\F_2^n$ is $\Psi_n=\frac{1}{n}\sum_{k|n}\phi(k)2^{n/k}$, where $\phi(k)$ is the Euler's \emph{phi}-function.
Define a map $\sigma$ from $\F_{2}^n$ to the set of all the monomials in $\F_{2}^n$ as $\sigma: (x_1,x_2,\ldots,x_n)\in\F_2^n\rightarrow x_{i_1}x_{i_2}\cdots x_{i_m}$, where $x_{i_j}=1$ for $j=1,\ldots, m$, and $x_j=0$ otherwise, $\sigma(0)=1$. It is obvious that $\sigma$ is one-to-one. Then, we also have the concepts of orbit and representative element for monomials, and the number of distinct orbits is $\Psi_n$.

\subsection{General results on the construction of WPB functions}\label{SubSec:main.con.}

\begin{theorem}\label{thm.f=f2 is WPB}
    For a Boolean function $f$ over $\F_{2^n}$, if $f(x^2)=f(x)+1$ holds for all $x\in\F_{2^n}\setminus\{0,1\}$, where $+$ is in $\F_2$, then $f$ is WPB.
\end{theorem}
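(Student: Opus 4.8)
The plan is to translate the multiplicative hypothesis on $f$ into a combinatorial statement about the cyclic-shift orbits introduced above. Recall that in the fixed normal basis the squaring map $x\mapsto x^2$ acts on coordinate vectors exactly as the left cyclic shift $\rho_n^1$, so it permutes coordinates and therefore preserves the Hamming weight; in particular it maps each slice $E_{n,k}$ to itself. The two excluded field elements $0$ and $1$ are precisely the all-zero and all-one vectors, i.e.\ the elements of weight $0$ and $n$, and these lie outside every $E_{n,k}$ with $1\leqslant k\leqslant n-1$. Hence the hypothesis $f(x^2)=f(x)\oplus 1$ holds on all of $E_{n,k}$ for each $k$ in the relevant range, and it suffices to control the distribution of $f$ over a single slice $E_{n,k}$.

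First I would iterate the functional equation: from $f(x^2)=f(x)\oplus 1$ one gets by induction $f\!\left(x^{2^i}\right)=f(x)\oplus(i \bmod 2)$ for every $i\geqslant 0$. Now fix an orbit $G_k^{(l)}(x)=\{x,\rho_n^1(x),\ldots,\rho_n^{l-1}(x)\}=\{x,x^2,\ldots,x^{2^{l-1}}\}$ contained in $E_{n,k}$, where $l\mid n$ is the orbit length and the listed elements are distinct. Since $n$ is a power of $2$, so is $l$, and because $x$ has weight strictly between $0$ and $n$ its orbit is nontrivial, forcing $l\geqslant 2$; in particular $l$ is even. (Consistently, applying the iterated identity with $i=l$ gives $f(x)=f\!\left(x^{2^l}\right)=f(x)\oplus(l\bmod 2)$, which again forces $l$ to be even.) Along such an orbit the value of $f$ alternates between $f(x)$ and $f(x)\oplus 1$, so exactly $l/2$ of the $l$ elements take the value $1$, regardless of whether $f(x)=0$ or $f(x)=1$.

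It then remains to sum over orbits. By the decomposition of $E_{n,k}$ into shift orbits recalled above, for $1\leqslant k\leqslant n-1$ every orbit appearing in $E_{n,k}$ has even length and contributes exactly half of its elements to the set where $f=1$. Summing these contributions yields $\w_{\mathrm{H}}(f)_k=\tfrac12\,|E_{n,k}|={n\choose k}/2$, which is precisely the WPB condition of Definition~\ref{Def.WPB}. The only point requiring care---and the crux of the argument---is the evenness of every orbit length, which is exactly where the standing assumption that $n$ is a power of $2$ (together with $0<k<n$) enters; once that is established, the alternation forces perfect balance slice by slice.
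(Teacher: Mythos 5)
Your proof is correct and takes essentially the same route as the paper's own argument: decompose each slice $E_{n,k}$ into squaring (cyclic-shift) orbits, observe that every orbit meeting $E_{n,k}$ with $1\leqslant k\leqslant n-1$ has even length because its length divides the power of two $n$ and exceeds $1$, and use the alternation $f(x^2)=f(x)\oplus 1$ to conclude that $f$ is balanced on each orbit and hence on each slice. Your explicit check that squaring preserves each slice, so the functional equation applies along entire orbits, is a detail the paper leaves implicit, but the core argument is identical.
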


\begin{proof}
It is easy to see that for an orbit $G^{(l)}(x_0)$ of length $l>1$ in $\F_{2^n}$ (we identify $\F_{2^n}$ with $\F_2^n$ under a normal basis), we have $l|n$. Then, $l$ is power of 2, and thus $l$ is even. Since $f(x^2)=f(x)+1$ for $x\in\F_{2^n}\setminus\{0,1\}$, then $f(x_0)=f(x_0^3)=\cdots=f(x_0^{l-1})=a$ and $f(x_0^2)=f(x_0^4)=\cdots=f(x_0^l)=a\oplus 1$, where $a\in\F_{2}$. Hence, $f$ is balanced on $G^{(l)}(x_0)$. It is clear that all the orbits whose elements have Hamming weight $k$ generate a partition of the set $E_{n,k}=\{x\in\F_2^n, \w_{\mathrm{H}}(x)=k\}$. Therefore, $f$ is balanced on $E_{n,k}$ for $1\leqslant k\leqslant n-1$. According to Definition~\ref{Def.WPB}, $f$ is WPB.
\end{proof}

\begin{theorem}\label{thm.three equivalent forms}
    For a Boolean function $f$ with $f(0)=0$ and $f(1)=1$, the following assertions are equivalent:
    \begin{enumerate}
      \item $f(x^2)=f(x)+1$ holds for all $x\in\F_{2^n}\setminus\{0,1\}$,
      \item $f(x)=\sum_{j\in\Gamma_n\setminus\{0\}}\Tr_1^{o(j)}(\beta^{i_j}x^j)$, where $\beta$ is an primitive element of $\F_{2^2}$, $i_j\in\{1,2\}$ for $j\in\Gamma_n\setminus\{0\}$,
      \item $f(x)=\bigoplus_{i=0}^{l(1)/2-1}\rho_n^{2i+a_1}(x_1)\oplus\bigoplus_{i=0}^{l(1,2)/2-1}\rho_n^{2i+a_2}(x_1x_2)\oplus\cdots\oplus\bigoplus_{i=0}^{l(1,2,\ldots, n-1)/2-1}\rho_n^{2i+a_{\Psi_n-2}}$\\$(x_1x_2\cdots x_{n-1})$, where the monomials in the sums are all the representative elements except for $1$ and $x_1x_2\cdots x_n$, $l(\cdot)$ is the length of the orbit for monomials, and $a_j\in\{0,1\}$ for $j=1,2,\ldots,\Psi_n-2$.
    \end{enumerate}
\end{theorem}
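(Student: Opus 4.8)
The plan is to prove the two equivalences $(1)\Leftrightarrow(2)$ and $(1)\Leftrightarrow(3)$ separately, treating the functional equation (1) as the hinge: (2) is its univariate (trace) translation and (3) its multivariate (ANF) translation. In both cases the common device is to regard the \emph{Frobenius derivative} $D(x):=f(x^2)\oplus f(x)$ as a genuine function on all of $\F_{2^n}$, compute it exactly, and then use uniqueness of the relevant representation to read off the coefficient constraints. The crucial preliminary observation is that, by (1) together with the standing hypotheses $f(0)=0$ and $f(1)=1$, one has $D(x)=1$ for $x\notin\{0,1\}$ while $D(0)=D(1)=0$ (the points $0,1$ are fixed by squaring), so $D=\mathbf 1[x\notin\{0,1\}]$ on the nose.

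For $(1)\Leftrightarrow(2)$ I would work from the general trace form $f(x)=a_0\oplus\sum_{j\in\Gamma_n\setminus\{0\}}\Tr_1^{o(j)}(a_jx^j)\oplus\epsilon(1\oplus x^{2^n-1})$. The key algebraic fact is that, because $a_j\in\F_{2^{o(j)}}$ and $j2^{o(j)}\equiv j\pmod{2^n-1}$, replacing $x$ by $x^2$ replaces $a_j$ by its square root: $\Tr_1^{o(j)}(a_jx^{2j})=\Tr_1^{o(j)}(a_j^{2^{o(j)-1}}x^j)$. Hence in $D=f(x^2)\oplus f(x)$ the $\epsilon$- and $a_0$-contributions cancel (the latter since $a_0\in\F_2$ is its own square root, and $x^{2^n-1}$ is fixed by squaring), leaving $D=\sum_{j\ne0}\Tr_1^{o(j)}\big((a_j^{2^{o(j)-1}}\oplus a_j)x^j\big)$. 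Matching this with the independently computed trace form $D=\mathbf 1[x\notin\{0,1\}]=\sum_{j\ne0}\Tr_1^{o(j)}(x^j)$ (obtained from $(x+1)^{2^n-1}=\sum_k x^k$) and invoking uniqueness forces $a_j^{2^{o(j)-1}}\oplus a_j=1$, i.e. $a_j^2\oplus a_j\oplus1=0$; its roots are precisely the primitive cube roots of unity $\beta,\beta^2\in\F_4$, and they indeed lie in $\F_{2^{o(j)}}$ because for $j\ne0$ the size $o(j)$ is an even power of $2$. To force $a_0=\epsilon=0$ I would evaluate at $x=1$: since $\Tr_1^{o(j)}(\beta^{i_j})=1$ exactly when $o(j)=2$ and $=0$ otherwise, and there is a unique order-$2$ coset (namely $\{(2^n-1)/3,\,2(2^n-1)/3\}$, which exists as $3\mid 2^n-1$), one gets $f(1)=a_0\oplus1$, so $f(1)=1$ yields $a_0=0$ and hence $\epsilon=a_0=0$. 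The converse $(2)\Rightarrow(1)$ is the same computation reversed, using $\sqrt{\beta^{i}}\oplus\beta^{i}=\beta\oplus\beta^2=1$.

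For $(1)\Leftrightarrow(3)$ I would argue in the normal-basis/ANF picture, where squaring is the cyclic shift $\rho_n^1$, so $f(x^2)=f\circ\rho_n^1$. Writing $f=\bigoplus_m a_m\,m$ over monomials, the coefficient of $m$ in $D=f\circ\rho_n^1\oplus f$ is $a_{\rho_n^{-1}(m)}\oplus a_m$; on the other hand $D=\mathbf 1[x\notin\{0,1\}]$, whose ANF is the sum of all monomials except the constant $1$ and the top $x_1\cdots x_n$ (from $\mathbf 1[x=0]=\prod_i(1\oplus x_i)$ and $\mathbf 1[x=(1,\dots,1)]=x_1\cdots x_n$). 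Comparing coefficients monomial-by-monomial gives $a_m\oplus a_{\rho_n^{-1}(m)}=1$ on every monomial orbit other than the two fixed singletons, and an empty condition on those two. On a non-fixed orbit the length $l$ divides $n$ and exceeds $1$, hence is even, so the recurrence $a_{m_i}=a_{m_{i-1}}\oplus1$ around the cycle is consistent and its solutions are the two alternating patterns; keeping the monomials with coefficient $1$ produces exactly $\bigoplus_{i=0}^{l/2-1}\rho_n^{2i+a}(m_O)$ with a single free bit $a\in\{0,1\}$, which is the form in (3). Reversing this argument gives $(3)\Rightarrow(1)$: for $f$ as in (3), pairing the shift $2i+a$ with $2i+a+1$ shows that $D=f\circ\rho_n^1\oplus f$ collects, on each orbit, all of its shifts, i.e. $D=\mathbf 1[x\notin\{0,1\}]$, which is (1).

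The main obstacle — and the only genuinely delicate point — is that (1) is an identity only \emph{off} the two Frobenius fixed points $0$ and $1$, so one must resist equating $f\circ\mathrm{Frob}$ with $f\oplus1$ as polynomials or ANFs. The correct bookkeeping, used above, is to compute the global function $D=f\circ\mathrm{Frob}\oplus f$ and pin down its values at $0$ and $1$ as well, identifying it as $\mathbf 1[x\notin\{0,1\}]$; the entire content of both equivalences then reduces to representing this one indicator — as $\sum_{j\ne0}\Tr_1^{o(j)}(x^j)$ on the trace side and as $\bigoplus_{m\ne 1,\,x_1\cdots x_n}m$ on the ANF side. A secondary, purely combinatorial check is the order-$2$ coset count (equivalently the even-length-orbit structure) needed to confirm that $f(0)=0,f(1)=1$ are compatible with the coefficient-free shapes of (2) and (3), i.e. that $a_0=\epsilon=0$.
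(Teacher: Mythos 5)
Your proposal is correct, but it closes the equivalences by a genuinely different mechanism than the paper. Both you and the paper carry out the same two ``easy'' directions: a function of shape (2) or (3) has Frobenius derivative $f(x^2)\oplus f(x)$ equal to the indicator of $\F_{2^n}\setminus\{0,1\}$ (via $\beta\oplus\beta^2=1$ on the trace side, via collecting all cyclic shifts of each representative monomial on the ANF side). The difference is in the converses: the paper never solves for coefficients. It instead counts --- a function satisfying (1) is freely determined by its values on the $\Psi_n-2$ orbit representatives in $\F_{2^n}\setminus\{0,1\}$, so there are $2^{\Psi_n-2}$ such functions, and the parametrizations in (2) and (3) each yield $2^{\Psi_n-2}$ functions; since shape (2), resp.\ (3), is contained in shape (1) and the cardinalities match, the sets coincide. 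You prove $(1)\Rightarrow(2)$ and $(1)\Rightarrow(3)$ head-on: identify $D=f\circ\mathrm{Frob}\oplus f$ with the indicator, write its unique trace/ANF representations ($\sum_{j\in\Gamma_n\setminus\{0\}}\Tr_1^{o(j)}(x^j)$, resp.\ the sum of all monomials except $1$ and $x_1\cdots x_n$), and match coefficients, solving $a_j^{2^{o(j)-1}}\oplus a_j=1$ (equivalently $a_j^2\oplus a_j\oplus 1=0$, so $a_j\in\{\beta,\beta^2\}$, legitimately in $\F_{2^{o(j)}}$ because $2\mid o(j)$ for $j\ne 0$) and the alternating recurrence on the even-length monomial orbits. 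Your route is constructive, it makes explicit the injectivity fact the paper's counting silently uses (that distinct tuples $(i_j)$, resp.\ $(a_j)$, give distinct functions, which is exactly uniqueness of the trace form/ANF), and it correctly isolates the delicate point that (1) holds only off the Frobenius fixed points, pinning down $a_0$, $\epsilon$ and the two fixed monomials by evaluation at $0$ and $\mathbf{1}$; the paper's route buys brevity and reuses the count $2^{\Psi_n-2}$ in a later remark. Two loose ends you flagged but should actually execute: $\epsilon=0$ follows from $f(0)=a_0\oplus\epsilon=0$ once $f(1)=a_0\oplus 1$ has forced $a_0=0$; and in (3) the coefficient of $x_1\cdots x_n$ is killed by the parity count $\sum_O l_O/2=(2^n-2)/2=2^{n-1}-1$, which is odd, so $f(\mathbf{1})=1\oplus a_{\mathrm{top}}$ and $f(\mathbf{1})=1$ forces $a_{\mathrm{top}}=0$ (while $f(0)=0$ kills the constant term). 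Both are one-line checks and do not affect correctness.
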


\begin{proof}
We first prove that item 1 is equivalent to item 2. Note that for $j\in\Gamma_n\setminus\{0\}$, $o(j)$ is divisor of $n$, and thus even. Then, for $g(x)=\Tr_1^{o(j)}(\beta x^j)$, we have
\begin{align*}
    g(x)+g(x^2)=&\left(\beta x^j+\beta^2 x^{2j}+\cdots +\beta x^{2^{o(j)-2}j}+\beta^2 x^{2^{o(j)-1}j}\right)\\
                 &+\left(\beta x^{2j}+\beta^2 x^{4j}+\cdots +\beta x^{2^{o(j)-1}j}+\beta^2 x^{j}\right)\\
    =&\left(\beta+\beta^2\right)\Tr_1^{o(j)}(x^j)\\
    =&\Tr_1^{o(j)}(x^j),
\end{align*}
where the last equation is from the fact $\beta+\beta^2=1$ for $\beta$ is an primitive element of $\F_{2^2}$.
Similarly, if $g(x)=\Tr_1^{o(j)}(\beta^2 x^j)$, then $g(x)+g(x^2)=\Tr_1^{o(j)}(x^j)$. Hence, for $f(x)=\sum_{j\in\Gamma_n\setminus\{0\}}\Tr_1^{o(j)}(\beta^{i_j}x^j)$, we have that, for all $x\in\F_{2^n}\setminus\{0,1\}$,
\begin{align*}
    f(x)+f(x^2)=&\sum_{j\in\Gamma_n\setminus\{0\}}\Tr_1^{o(j)}(x^j)=\sum_{j=1}^{2^n-2}x^j=1.
\end{align*}
Thus, item 1 is a necessary condition of item 2.
From the proof of Theorem~\ref{thm.f=f2 is WPB}, we know that the number of Boolean functions satisfying the condition in item 1 is $2^{\Psi_n-2}$. Indeed, the truth table of $f$ in item 1 is determined by the values of $f$ on all the representative elements of the orbits in $\F_{2}^n$, and there are exactly $\Psi_n-2$ distinct orbits in $\F_{2^n}\setminus\{0,1\}$.
On the other hand, it is easy to see that the number of nonzero cyclotomic classes of $2$ modulo $2^n-1$ is $\Psi_n-2$. In fact, every cyclotomic class can be seen as an orbit in $\F_{2}^n$ if the numbers are in binary form, and the orbits $\{(0,0,\ldots ,0)\}$ and $\{(1,1,\ldots, 1)\}$ are corresponding to the cyclotomic class $\{0\}$. Hence, the number of Boolean functions in item 2 is $2^{\Psi_n-2}$. From the above discussion, we get that item 1 is equivalent to item 2.

Now we prove that  item 1 is equivalent to item 3. Recall that $x^2$ is a left shift of $x$. Suppose that $f(x)$ is defined as in item 3, then
\begin{align*}
    f(x)+f(x^2)&=\bigoplus_{i=0}^{l(1)-1}\rho_n^{i}(x_1)\oplus\bigoplus_{i=0}^{l(1,2)-1}\rho_n^{i}(x_1x_2)\oplus\cdots\oplus\bigoplus_{i=0}^{l(1,2,\ldots, n-1)-1}\rho_n^{i}(x_1x_2\cdots x_{n-1})\\
    &=1\oplus x_1x_2\cdots x_n\oplus(x_1\oplus1)(x_2\oplus1)\cdots(x_n\oplus 1).
\end{align*}
 It is easy to see that $1\oplus x_1x_2\cdots x_n\oplus(x_1\oplus1)(x_2\oplus1)\cdots(x_n\oplus 1)=1$ for all $x\in\F_{2^n}\setminus\{0,1\}$, and thus $f(x)$ satisfies the condition in item 1. Moreover, since $a_j\in\{0,1\}$ for $j=1,2,\ldots,\Psi_n-2$, then the number of functions in item 3 is $2^{\Psi_n-2}$ which is equal to the number of functions in item 1. Therefore, item 1 is equivalent to item 3.
\end{proof}

Combining Theorem~\ref{thm.f=f2 is WPB} with Theorem~\ref{thm.three equivalent forms}, we obtain a construction of WPB functions in trace form as well as the algebraic normal form.
Using the trace form, we conclude by the following corollary.

\begin{corollary}\label{corol.trace form}
     The Boolean function $f$ over $\F_{2^n}$, where
     \begin{align}\label{corol.equ.trace form}
        f(x)=\sum_{j\in\Gamma_n\setminus\{0\}}\Tr_1^{o(j)}(\beta^{i_j}x^j)
     \end{align}
     is a WPB function with $\deg(f)=n-1$, where $\beta$ is an primitive element of $\F_{2^2}$, $i_j\in\{1,2\}$ for $j\in\Gamma_n\setminus\{0\}$.
\end{corollary}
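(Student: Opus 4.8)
The plan is to read the corollary off the two preceding theorems and then dispose of the degree claim by a short combinatorial computation on cyclotomic coset leaders. For the WPB part, I would note that the displayed $f$ is exactly the trace form appearing as item 2 of Theorem~\ref{thm.three equivalent forms}. Consequently the equivalence proved there gives item 1, namely $f(x^2)=f(x)+1$ for all $x\in\F_{2^n}\setminus\{0,1\}$, and Theorem~\ref{thm.f=f2 is WPB} then immediately yields that $f$ is WPB. The one subtlety is that Theorem~\ref{thm.three equivalent forms} carries the standing hypothesis $f(0)=0$, $f(1)=1$, so before invoking it I would verify these two boundary values for~\eqref{corol.equ.trace form}. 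The value $f(0)=0$ is trivial, since every $x^j$ with $j\geqslant 1$ vanishes at $0$.

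For $f(1)=\bigoplus_{j\in\Gamma_n\setminus\{0\}}\Tr_1^{o(j)}(\beta^{i_j})$ I would use that each $o(j)$ is an even divisor of $n=2^k$. Factoring $\Tr_1^{o(j)}=\Tr_1^2\circ\Tr_2^{o(j)}$ and using $\beta^4=\beta$ and $\beta+\beta^2=1$ in $\F_{2^2}$, every summand collapses to $o(j)/2 \bmod 2$, independently of the choice $i_j\in\{1,2\}$; this equals $1$ precisely when $o(j)=2$. Counting such cosets amounts to solving $3j\equiv 0 \pmod{2^n-1}$ with $j\not\equiv 0$, which, since $3\mid 2^n-1$ for even $n$, has exactly one coset of solutions, giving $f(1)=1$. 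Alternatively, I could avoid this count altogether by reusing only the boundary-free identity $f(x)+f(x^2)=\sum_{j=1}^{2^n-2}x^j=1$ for $x\neq 0,1$ from the proof of Theorem~\ref{thm.three equivalent forms} and feeding it straight into Theorem~\ref{thm.f=f2 is WPB}.

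For the degree I would invoke the formula $\deg(f)=\max\{\mathrm{wt}_2(j):a_j\neq 0\}$ recalled in Section~\ref{Sec: Preliminaries}. Because $\beta$ is primitive in $\F_{2^2}$ and $i_j\in\{1,2\}$, every coefficient $a_j=\beta^{i_j}$ is nonzero, so no term is suppressed and $\deg(f)=\max\{\mathrm{wt}_2(j):j\in\Gamma_n\setminus\{0\}\}$. Since multiplication by $2$ cyclically permutes binary digits, $\mathrm{wt}_2$ is constant on each coset, so this maximum is simply the largest binary weight among $j\in\{1,\ldots,2^n-2\}$; that weight is $n-1$, attained at $j=2^n-2$, while the only integer of weight $n$ is $2^n-1\equiv 0$, which is excluded. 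Hence $\deg(f)=n-1$.

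I expect the only real obstacle to be the verification of $f(1)=1$, that is, honestly discharging the standing hypothesis of Theorem~\ref{thm.three equivalent forms} rather than assuming it; the coset-size count is the delicate step, although the boundary-free identity makes it avoidable. Everything else---both the appeal to Theorems~\ref{thm.f=f2 is WPB} and~\ref{thm.three equivalent forms} and the degree computation---is essentially mechanical once one observes that all trace coefficients are forced to be nonzero.
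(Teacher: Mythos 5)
Your proposal is correct and takes essentially the same route as the paper, which obtains the corollary precisely by combining Theorem~\ref{thm.f=f2 is WPB} with item~2 of Theorem~\ref{thm.three equivalent forms} and reading the degree off the trace-form formula $\deg(f)=\max\{\mathrm{wt}_2(j)\mid a_j\ne 0\}$, where all coefficients $\beta^{i_j}$ are nonzero and the maximal binary weight over $\Gamma_n\setminus\{0\}$ is $n-1$. The only difference is that you explicitly discharge the boundary hypothesis $f(0)=0$, $f(1)=1$ of Theorem~\ref{thm.three equivalent forms} (your coset count for $o(j)=2$ is correct), or equivalently bypass it via the identity $f(x)+f(x^2)=\sum_{j=1}^{2^n-2}x^j=1$ for $x\in\F_{2^n}\setminus\{0,1\}$ fed directly into Theorem~\ref{thm.f=f2 is WPB}---a detail the paper leaves implicit, and both of your checks are valid.
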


\begin{remark}
    From the proof of Theorem~\ref{thm.three equivalent forms}, we know that the number of WPB functions constructed in Corollary~\ref{corol.trace form} is $2^{\Psi_n-2}$.
\end{remark}

\begin{remark}
    In~\cite[Proposition 3]{Carlet2016}, Carlet et al. recursively built a class of WPB functions of $n=2^k$ variables, which has algebraic degree $n/2$. Since the  algebraic degree of the functions in (\ref{corol.equ.trace form}) is $n-1$, we know that the WPB functions in (\ref{corol.equ.trace form}) is not EA equivalent to that in~\cite[Proposition 3]{Carlet2016}. and thus we obtain a new construction of WPB functions.
    Note that from the cryptanalysis viewpoint, the algebraic degree of a Boolean function should be high, but for the Boolean functions used in the filter permutator model (e.g. cipher FLIP). the homomorphic-friendly design requires to reduce the multiplicative depth of the decryption circuit, i.e., a lower algebraic degree is preferred. Thus, there exists a trade off between the security and the performance. 
\end{remark}

\subsection{On the analysis of the weightwise nonlinearity profile of WPB functions}\label{SubSec:main.nonlinearity.}

In this part, we mainly discuss the weightwise nonlinearity profile of the WPB functions given in Corollary~\ref{corol.trace form}.
We first present a property for a normal WPB function.

\begin{proposition}\label{propo. NL1 equals 0}
    For any WPB function $f$, we have $\NL_{1}(f)=0$.
\end{proposition}

\begin{proof}
Let $e_i$ be the identity vector in $\F_2^n$ with $1$ in the $i$-th position and zeros elsewhere.
Since $f$ is balanced on $E_{n,1}=\{x\in\F_2^n, \w_{\mathrm{H}}(x)=1\}$, we have
$f(e_{i_1})=f(e_{i_2})=\cdots=f(e_{i_{n/2}})=1$ and $f(x)=0$ for $x\in E_{n,1}\setminus\{e_{i_1},e_{i_2},\ldots,e_{i_{n/2}}\}$.
Then, it is easy to see that $f$ is equal to the linear function $x_{i_1}\oplus x_{i_2}\oplus\cdots\oplus x_{i_{n/2}}$ when they are restricted to $E_{n,1}$. Thus, $\NL_{1}(f)=0$.
\end{proof}

Let $\Omega_k$ denote the set of all the representative elements with Hamming weight $k$ in $\F_{2}^n$, i.e., $\Omega_k=\bigcup_{l|n}\Omega_{k}^{(l)}$.
For an orbit $G_k^{(l)}(x)$, we denote $\widetilde{G}_k^{(l)}(x)=\{x,\rho_n^2(x),\ldots,\rho_n^{l-2}(x)\}$ (note that since $l|n$, $l$ is even).
Krawtchouk polynomial (see\cite{MacWilliams}) of degree $k$ is defined by $K_k(i,n)=\sum_{j=0}^{k}(-1)^j{i \choose j}{n-i\choose k-j}$. It is known that $\sum_{x\in E_{n,k}}(-1)^{a\cdot x}=K_k(\mathrm{w_H}(a),n)$.

\begin{theorem}
    For a WPB function $f$ in (\ref{corol.equ.trace form}), we have
    \begin{align}\label{eq.K-poly and nonlinearity}
        \NL_k(f)=\frac{1}{2}{n \choose k}-\frac{1}{2}\max_{a\in\bigcup_{k'=1}^{n/2}\Omega_{k'}}\Bigg|K_k(\mathrm{w_H}(a),n)- 2\sum_{x\in\bigcup_{l|n}\bigcup_{\Lambda\in\Omega_k^{(l)}}\widetilde{G}_k^{(l)}(\Lambda)}(-1)^{a\cdot x} \Bigg|,
    \end{align}
    where $2\leqslant k\leqslant n-1$, and $f(\Lambda)=1$ for all $\Lambda\in\Omega_k^{(l)}$.
\end{theorem}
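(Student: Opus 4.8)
The plan is to start from the restricted-nonlinearity formula of Proposition~\ref{prop.restricted nonlineariy} applied to $E=E_{n,k}$, for which $|E|={n\choose k}$, and then to simplify the Walsh-type sum $S(a):=\sum_{x\in E_{n,k}}(-1)^{f(x)\oplus a\cdot x}$ into the expression sitting inside the absolute value. First I would split $E_{n,k}$ into the support $B=\{x\in E_{n,k}\mid f(x)=1\}$ and its complement; writing the two partial sums $\sum_{x\in B}(-1)^{a\cdot x}$ and $\sum_{x\in E_{n,k}\setminus B}(-1)^{a\cdot x}$, then adding and subtracting them and invoking the Krawtchouk identity $\sum_{x\in E_{n,k}}(-1)^{a\cdot x}=K_k(\w_{\mathrm H}(a),n)$ quoted just before the theorem, gives
\[
S(a)=K_k(\w_{\mathrm H}(a),n)-2\sum_{x\in B}(-1)^{a\cdot x}.
\]
It then remains to identify $B$. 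Recall from the proof of Theorem~\ref{thm.f=f2 is WPB} that, since $x^2=\rho_n^1(x)$ and $f(x^2)=f(x)+1$, the function $f$ alternates along each orbit: under the hypothesis $f(\Lambda)=1$ on the representative, $f$ equals $1$ exactly on the even shifts $\widetilde G_k^{(l)}(\Lambda)=\{\Lambda,\rho_n^2(\Lambda),\ldots,\rho_n^{l-2}(\Lambda)\}$. Hence $B=\bigcup_{l\mid n}\bigcup_{\Lambda\in\Omega_k^{(l)}}\widetilde G_k^{(l)}(\Lambda)$, which yields precisely the claimed summand.

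The remaining, and main, task is to reduce the maximum over all $a\in\F_2^n$ to the maximum over representatives $a\in\bigcup_{k'=1}^{n/2}\Omega_{k'}$. I would do this by establishing two symmetries of $|S(a)|$. For cyclic shifts the key identity is $(\rho_n^s a)\cdot x=a\cdot\rho_n^{-s}(x)$; substituting $y=\rho_n^{-s}(x)$, which is a weight-preserving bijection of $E_{n,k}$, and using $f(\rho_n^s(y))=f(y^{2^s})=f(y)+s\pmod 2$ — the iterate of the defining relation $f(x^2)=f(x)+1$, valid since $1\le k\le n-1$ forces $y\notin\{0,\mathbf 1\}$ — gives $S(\rho_n^s a)=(-1)^s S(a)$, so $|S|$ is constant on each shift-orbit of $a$. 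For complementation by the all-ones vector $\mathbf 1$, I use $\mathbf 1\cdot x=\w_{\mathrm H}(x)\equiv k\pmod 2$ on $E_{n,k}$, so that $(a\oplus\mathbf 1)\cdot x=a\cdot x\oplus k$ and hence $S(a\oplus\mathbf 1)=(-1)^k S(a)$; thus $|S(a\oplus\mathbf 1)|=|S(a)|$ while $\w_{\mathrm H}(a\oplus\mathbf 1)=n-\w_{\mathrm H}(a)$.

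Combining the two symmetries lets me restrict to one $a$ per shift-orbit and to weight at most $n/2$: any $a$ with $\w_{\mathrm H}(a)>n/2$ is replaced by its complement of smaller weight and then by the shift representative of that complement. The two excluded weights $k'=0$ and $k'=n$ correspond to $a=0$ and $a=\mathbf 1$, for which $|S(a)|=0$ because $f$ is WPB and hence balanced on $E_{n,k}$; since the maximum is nonnegative, these contribute nothing and may be dropped. This shows the maximum over $\F_2^n$ equals the maximum over $\bigcup_{k'=1}^{n/2}\Omega_{k'}$, completing the formula. I expect the shift-invariance step to be the crux, as it is the one place where the structural hypothesis $f(x^2)=f(x)+1$ is genuinely used; the complementation symmetry and the disposal of $a\in\{0,\mathbf 1\}$ are short computations, and the rest is bookkeeping with Proposition~\ref{prop.restricted nonlineariy} and the Krawtchouk identity.
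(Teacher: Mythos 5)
Your proposal is correct and follows essentially the same route as the paper's proof: both start from Proposition~\ref{prop.restricted nonlineariy}, use the alternation of $f$ along orbits (with $f(\Lambda)=1$ on representatives) together with the Krawtchouk identity to rewrite the Walsh-type sum as $K_k(\w_{\mathrm{H}}(a),n)-2\sum_{x\in\bigcup\widetilde{G}_k^{(l)}(\Lambda)}(-1)^{a\cdot x}$, and then shrink the maximum via the complementation symmetry $S(a\oplus\mathbf{1})=(-1)^kS(a)$ and the shift antisymmetry of $S$, discarding $a\in\{0,\mathbf{1}\}$ by balancedness. Your only deviation is cosmetic: you obtain the shift symmetry for general $s$ directly from the iterated relation $f(y^{2^s})=f(y)+s \pmod 2$, whereas the paper derives the $s=1$ case $S(\rho_n^1(a))=-S(a)$ from the orbit decomposition and iterates implicitly.
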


\begin{proof}
According to Proposition~\ref{prop.restricted nonlineariy}, we have
\begin{align*}
\NL_{k}(f) &=   \frac{1}{2}{n \choose k}-\frac{1}{2}\max_{a\in\F_{2}^n}\Bigg|\sum_{x\in E_{n,k}}(-1)^{f(x)\oplus a\cdot x}\Bigg|
\end{align*}
If $\mathrm{w_H}(a)>n/2$, then define $\overline{a}=a+\mathbf{1}$, and thus $0\leqslant\mathrm{w_H}(\overline{a})< n/2$. Since
\begin{align*}
    \Bigg|\sum_{x\in E_{n,k}}(-1)^{f(x)\oplus \overline{a}\cdot x}\Bigg|&=\Bigg|\sum_{x\in E_{n,k}}(-1)^{f(x)\oplus (a+\mathbf{1})\cdot x}\Bigg|\\
    &=\Bigg|\sum_{x\in E_{n,k}}(-1)^{f(x)\oplus a\cdot x\oplus\mathrm{w_H}(x)}\Bigg|\\
    &=\Bigg|(-1)^k\sum_{x\in E_{n,k}}(-1)^{f(x)\oplus a\cdot x}\Bigg|=\Bigg|\sum_{x\in E_{n,k}}(-1)^{f(x)\oplus a\cdot x}\Bigg|,
\end{align*}
and note that $\big|\sum_{x\in E_{n,k}}(-1)^{f(x)}\big|=0$ because $f$ is balanced on $E_{n,k}$, then we have
\begin{align}\label{eq.NL 1<wt(a)<n/2}
\NL_{k}(f)=\frac{1}{2}{n \choose k}-\frac{1}{2}\max_{\scriptstyle a\in\F_{2}^n \atop\scriptstyle 1\leqslant \mathrm{w_H}(a)\leqslant n/2}\Bigg|\sum_{x\in E_{n,k}}(-1)^{f(x)\oplus a\cdot x}\Bigg|.
\end{align}
From (\ref{eq.union of orbits}), we have
\begin{align}
\nonumber    \sum_{x\in E_{n,k}}(-1)^{f(x)\oplus a\cdot x}&=\sum_{x\in\bigcup_{l|n}\bigcup_{\Lambda\in\Omega_k^{(l)}}G_k^{(l)}(\Lambda)}(-1)^{f(x)\oplus a\cdot x}\\
\nonumber    &=\sum_{l|n}\sum_{\Lambda\in\Omega_k^{(l)}}\sum_{x\in G_k^{(l)}(\Lambda)}(-1)^{f(x)\oplus a\cdot x}\\
\label{eq.lambda -rho(lambda)}    &=\sum_{l|n}\sum_{\Lambda\in\Omega_k^{(l)}}\left(\sum_{x\in\widetilde{G}_k^{(l)}(\Lambda)}(-1)^{1\oplus a\cdot x}+\sum_{x\in\widetilde{G}_k^{(l)}(\rho_n^1(\Lambda))}(-1)^{a\cdot x}\right)\\
\nonumber    &=-\sum_{x\in\bigcup_{l|n}\bigcup_{\Lambda\in\Omega_k^{(l)}}\widetilde{G}_k^{(l)}(\Lambda)}(-1)^{a\cdot x}+\sum_{x\in\bigcup_{l|n}\bigcup_{\Lambda\in\Omega_k^{(l)}}\widetilde{G}_k^{(l)}(\rho_n^1(\Lambda))}(-1)^{a\cdot x}\\
\nonumber    &=\sum_{x\in E_{n,k}}(-1)^{a\cdot x}-2\sum_{x\in\bigcup_{l|n}\bigcup_{\Lambda\in\Omega_k^{(l)}}\widetilde{G}_k^{(l)}(\Lambda)}(-1)^{a\cdot x}\\
\label{eq.K poly-2sum}   &=K_k(\mathrm{w_H}(a),n)-2\sum_{x\in\bigcup_{l|n}\bigcup_{\Lambda\in\Omega_k^{(l)}}\widetilde{G}_k^{(l)}(\Lambda)}(-1)^{a\cdot x},
\end{align}
where $f(\Lambda)=1$ for $\Lambda\in\Omega_k^{(l)}$.
Note that $\rho_n^1(a)\cdot x=a\cdot \rho_n^{n-1}(x)$, then from (\ref{eq.lambda -rho(lambda)}), one has
\begin{align*}
    \sum_{x\in E_{n,k}}(-1)^{f(x)\oplus \rho_n^1(a)\cdot x}&=\sum_{l|n}\sum_{\Lambda\in\Omega_k^{(l)}}\left(\sum_{x\in\widetilde{G}_k^{(l)}(\Lambda)}(-1)^{1\oplus a\cdot \rho_n^{n-1}(x)}+\sum_{x\in\widetilde{G}_k^{(l)}(\rho_n^1(\Lambda))}(-1)^{a\cdot \rho_n^{n-1}(x)}\right)\\
    &=\sum_{l|n}\sum_{\Lambda\in\Omega_k^{(l)}}\left(-\sum_{x\in\widetilde{G}_k^{(l)}(\rho_n^1(\Lambda))}(-1)^{1\oplus a\cdot x}+\sum_{x\in\widetilde{G}_k^{(l)}(\Lambda)}(-1)^{a\cdot x}\right)\\
    &=-\sum_{x\in E_{n,k}}(-1)^{f(x)\oplus a\cdot x}.
\end{align*}
Then, due to (\ref{eq.NL 1<wt(a)<n/2}), we know that
\begin{align}\label{eq.NL a in Omega}
\NL_{k}(f)=\frac{1}{2}{n \choose k}-\frac{1}{2}\max_{a\in\bigcup_{k'=1}^{n/2}\Omega_{k'}}\Bigg|\sum_{x\in E_{n,k}}(-1)^{f(x)\oplus a\cdot x}\Bigg|.
\end{align}
Combining (\ref{eq.K poly-2sum}) with (\ref{eq.NL a in Omega}), we obtain the desired result.
\end{proof}

We now focus on general lower bounds on the $k$-weight nonlinearity of WPB functions in (\ref{corol.equ.trace form}).
Let $\NL_k^{(n)}$ denote the lower bound on $k$-weight nonlinearity for all WPB functions over $\F_{2^n}$ in (\ref{corol.equ.trace form}), i.e., for any WPB function $f$ over $\F_{2^n}$ in (\ref{corol.equ.trace form}), $\NL_k(f)\geqslant \NL_k^{(n)}$. Then, we have the following result.

\begin{theorem}\label{thm.NL(n-k)=NL(k)}
    For $1\leqslant k\leqslant n/2$, $\NL_{n-k}^{(n)}=\NL_{k}^{(n)}$.
\end{theorem}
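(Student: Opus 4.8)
The plan is to exploit the complementation bijection $x\mapsto\overline{x}=x+\mathbf{1}$ between $E_{n,k}$ and $E_{n,n-k}$ and to show that it carries the family (\ref{corol.equ.trace form}) onto itself, so that the two collections of restricted nonlinearities coincide. Under the chosen normal basis $\{\alpha,\ldots,\alpha^{2^{n-1}}\}$ the all-ones vector $\mathbf{1}=(1,\ldots,1)$ is precisely the field element $1$: it equals $\Tr_1^n(\alpha)$, which cannot vanish since the $\alpha^{2^i}$ are linearly independent. Hence $\overline{x}=x+1$ interchanges $0\leftrightarrow 1$ and restricts to a bijection of $\F_{2^n}\setminus\{0,1\}$ onto itself.

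First I would record the Walsh-type identity this bijection produces. Setting $y=\overline{x}$, Proposition~\ref{prop.restricted nonlineariy} gives, for every $f$,
\[
\NL_{n-k}(f)=\frac12\binom{n}{n-k}-\frac12\max_{a\in\F_2^n}\Big|\sum_{x\in E_{n,k}}(-1)^{f(\overline{x})\oplus a\cdot\overline{x}}\Big|.
\]
Since $a\cdot\overline{x}=a\cdot x\oplus\mathrm{w_H}(a)$ and $\binom{n}{n-k}=\binom{n}{k}$, the constant sign $(-1)^{\mathrm{w_H}(a)}$ factors out of the absolute value and disappears, so $\NL_{n-k}(f)=\NL_k(g)$ where $g(x):=f(\overline{x})$.

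The crux is then to show that $g$, after one harmless normalization, again lies in family (\ref{corol.equ.trace form}). Because $x^2$ is the left cyclic shift of $x$ and complementation commutes with coordinate shifts, one has $\overline{x^2}=(\overline{x})^2$; consequently, for $x\in\F_{2^n}\setminus\{0,1\}$ (so that $\overline{x}\in\F_{2^n}\setminus\{0,1\}$ as well), $g(x^2)=f((\overline{x})^2)=f(\overline{x})+1=g(x)+1$, using item~1 of Theorem~\ref{thm.three equivalent forms} for $f$. However $g$ carries the wrong endpoint values, $g(0)=f(1)=1$ and $g(1)=f(0)=0$, so I pass to $h:=g\oplus 1$. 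Adding the constant $1$ is adding an affine function, whence $\NL_k(h)=\NL_k(g)$, while a one-line check gives $h(0)=0$, $h(1)=1$ and $h(x^2)=g(x^2)\oplus 1=g(x)=h(x)+1$ on $\F_{2^n}\setminus\{0,1\}$; by the equivalence in Theorem~\ref{thm.three equivalent forms} (together with Theorem~\ref{thm.f=f2 is WPB}), $h$ belongs to family (\ref{corol.equ.trace form}).

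Combining the two steps yields $\NL_{n-k}(f)=\NL_k(h)$ with $h=f(\overline{\,\cdot\,})\oplus 1$ again in the family, and $f\mapsto h$ is an involution of the family. Thus $\{\NL_{n-k}(f)\}$ and $\{\NL_k(h)\}$, taken over all functions in (\ref{corol.equ.trace form}), are the same multiset, so their minima — that is, the optimal lower bounds $\NL_{n-k}^{(n)}$ and $\NL_k^{(n)}$ — agree. I expect the only genuinely delicate point to be the endpoint bookkeeping: verifying the identification $\mathbf{1}=1$ under the normal basis and confirming that the single constant flip $g\mapsto g\oplus 1$ simultaneously restores the normalization $h(0)=0,\,h(1)=1$ and preserves membership in the family, all without altering any restricted nonlinearity.
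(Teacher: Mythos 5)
Your proposal is correct and follows essentially the same route as the paper: both exploit the complementation bijection $x\mapsto x+\mathbf{1}$ between $E_{n,k}$ and $E_{n,n-k}$ together with the stability of the family (\ref{corol.equ.trace form}) under the induced operation, then compare minima over the family. The only difference is one of completeness rather than of method: the paper merely asserts the existence of a companion function $g$ in the family with $f(x)=g(x+\mathbf{1})$ on the relevant slice (implicitly defined orbitwise), whereas you construct it globally as $h=f(\cdot+\mathbf{1})\oplus 1$, verify membership via item~1 of Theorem~\ref{thm.three equivalent forms}, and spell out the details the paper leaves implicit (the Walsh-sum invariance under complementation, the identification of the all-ones vector with the field element $1$ via $\mathrm{Tr}_1^n(\alpha)\neq 0$, and the harmless constant flip restoring the normalization $h(0)=0$, $h(1)=1$).
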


\begin{proof}
 It is clear that $E_{n,n-k}=E_{n,k}+1=\{x+\mathbf{1}\mid x\in\F_2^n, \mathrm{w_H}(x)=k\}$, where $x+\mathbf{1}=(x_1\oplus1,\ldots,x_n\oplus 1)$. Then, for any  WPB function $f$ in (\ref{corol.equ.trace form}), there exists a WPB function $g$ in (\ref{corol.equ.trace form}) such that $f(x)=g(x+\mathbf{1})$ for any $x\in E_{n,k}$. Hence, $\NL_k(f)=\NL_{n-k}(g)\geqslant \NL_{n-k}^{(n)}$, and thus $\NL_k^{(n)}\geqslant \NL_{n-k}^{(n)}$. Conversely, there exists another  WPB function $h$ in (\ref{corol.equ.trace form}) such that $f(x)=h(x+\mathbf{1})$ for any $x\in E_{n,n-k}$. Hence, $\NL_{n-k}(f)=\NL_{k}(h)\geqslant \NL_{k}^{(n)}$, and thus $\NL_{n-k}^{(n)}\geqslant \NL_k^{(n)}$. Therefore, we obtain $\NL_{n-k}^{(n)}=\NL_{k}^{(n)}$.
\end{proof}

\begin{remark}
    Because of Proposition~\ref{propo. NL1 equals 0} and Theorem\ref{thm.NL(n-k)=NL(k)}, we only need to consider $\NL_{k}^{(n)}$, where $2\leqslant k\leqslant n/2$.
\end{remark}

\begin{example}\label{example.n=8}
   In Table~\ref{weightwise non. pro. n=8}, we calculate the weightwise nonlinearity profile for all $f$ in (\ref{corol.equ.trace form}) with $n=8$ variables by MAGMA.
   Due to Proposition~\ref{propo. NL1 equals 0} and the proof of Theorem~\ref{thm.NL(n-k)=NL(k)}, we only need to consider $\NL_k(f)$ for $k=2,3,4$.
   It is shown that for the best case, the $k$-weight nonlinearity of $f$ is near the upper bound in Proposition~\ref{prop.upper bound on the restricted nonlineariy}. In particular,  if $f$ satisfies
   \begin{align}\label{example.constraint on f with n=8}
    f(0,1,1,1,0,0,0,0)\ne f(1,1,0,1,0,0,0,0),
   \end{align}
 then $\NL_3(f)\geqslant 8$. Note that $(0,1,1,1,0,0,0,0)$ and $(1,1,0,1,0,0,0,0)$ are in different orbits.

\begin{table}[t!]
\caption{\label{weightwise non. pro. n=8} Weightwise nonlinearity profile of WPB functions in (\ref{corol.equ.trace form})  with $n=8$ variables }
   \begin{center}
    \begin{tabular}{|c|c|}%
     \hline
     $k$-weight nonlinearity of $f$ & $\left\lfloor {n \choose k}/2-\sqrt{ {n \choose k}}/2\right\rfloor$ \\  \hline
     $\NL_2(f)\in\{6,9\}$ & $11$\\
      $\NL_3(f)\in\{0,8,14,16,18,20,21,22\}$ & $24$ \\
      $\NL_4(f)\in\{19,22,23,24,25,26,27\}$ & $30$ \\
     \hline
   \end{tabular}
   \end{center}
 \end{table}
\end{example}

\begin{theorem}
    For any $n=2^l\geqslant 8$, we have
      \begin{align*}
        \NL_{2^i}^{(n)}\geqslant \left\{\begin{array}{ll}
        5,& \mbox{if}~2\leqslant i< l-1,\\
        19,& \mbox{if}~i=l-1.\end{array}\right.
      \end{align*}
\end{theorem}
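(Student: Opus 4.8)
The plan is to work from the nonlinearity formula (\ref{eq.K-poly and nonlinearity}) and exploit the freedom inside the family. By item~3 of Theorem~\ref{thm.three equivalent forms}, the value of $f$ on each orbit representative is an independent free bit, so for the fixed weight $k=2^i$ the bits $f(\Lambda)$, $\Lambda\in\Omega_k$, may be chosen arbitrarily. Writing the contribution of an orbit $O$ to $\sum_{x\in E_{n,k}}(-1)^{f(x)\oplus a\cdot x}$ as $\epsilon_O\,D_O(a)$, where $D_O(a)$ is the difference between the character sum $\sum(-1)^{a\cdot x}$ over the odd shifts $\widetilde{G}_k^{(l)}(\rho_n^1(\Lambda))$ and over the even shifts $\widetilde{G}_k^{(l)}(\Lambda)$ of $O$, and where the sign $\epsilon_O\in\{\pm1\}$ is governed by $f(\Lambda)$, one gets for every $a$ and every $f$ in the family that $\big|\sum_{x\in E_{n,k}}(-1)^{f(x)\oplus a\cdot x}\big|\le\sum_O|D_O(a)|$, with equality attained by a suitable choice of the free bits. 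Hence
\[
\NL_{2^i}^{(n)}=\tfrac12\binom{n}{2^i}-\tfrac12\max_{a\in\bigcup_{k'=1}^{n/2}\Omega_{k'}}\sum_O|D_O(a)|,
\]
so the task reduces to a uniform lower bound on the deficit $\sum_O\big(l_O-|D_O(a)|\big)$, where $l_O$ is the length of $O$. Since every nonconstant orbit has length dividing $n=2^l$, hence even, and $D_O(a)\equiv l_O\pmod 2$, each summand $l_O-|D_O(a)|$ is a nonnegative even integer; I must show the total is at least $10$ (resp.\ $38$ when $k=n/2$) for every admissible $a$.

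The core is a recursion $\NL_{2^i}^{(2^l)}\ge\NL_{2^{i-1}}^{(2^{l-1})}$ valid for $2\le i\le l-1$. To prove it I would set $m=n/2$ and pair the coordinates $(2t,2t+1)$, $t\in\Z/m$, so that $\rho_n^2$ (the map $x\mapsto x^4$) acts on the $m$ pairs exactly as the shift $\rho_m^1$ on $\F_2^m$. Because $k=2^i$ is even, I would first analyse the weight-$2^i$ vectors whose support meets each pair in $0$ or $2$ points; collapsing each full pair to a single $1$ sends such a vector to a weight-$2^{i-1}$ vector of $\F_2^m$, carries $\rho_n^2$-orbits to $\rho_m^1$-orbits, and matches the even/odd-shift splitting defining $D_O$ with the corresponding splitting in $\F_2^m$. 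Tracking the character sums through this correspondence should bound the big-field deficit below by the small-field deficit for a suitably induced linear functional $a'$, yielding the recursion. The main obstacle is exactly this correspondence: the half-shift $\rho_n^1$ does not respect the pairing, so I must control the interaction of $\rho_n^1$ with $\rho_n^2$, and separately account both for the vectors whose support splits a pair and for the short (non-full-length) orbits, showing that these can only increase the deficit.

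With the recursion in hand the two regimes close by descent to base cases. For $i=l-1$ the pair $(i,l)$ stays on the diagonal $k=n/2$ under the recursion and descends to $(2,3)$, whence $\NL_{2^{l-1}}^{(2^l)}\ge\NL_4^{(8)}=19$, the value read off from Table~\ref{weightwise non. pro. n=8}. For $2\le i<l-1$ the recursion moves off the diagonal and can be iterated down to weight $2$, reducing everything to the statement $\NL_2^{(m)}\ge5$ for every $m=2^{l'}\ge8$. This weight-$2$ bound I would prove directly: the weight-$2$ orbits are indexed by the gap $d\in\{1,\dots,m/2\}$, there are exactly $m/2$ of them, and $D_O(a)$ for the gap-$d$ orbit is an explicit alternating sum in the coordinates of $a$; establishing $\sum_O|D_O(a)|\le\binom{m}{2}-10$ uniformly in $a$ (anchored by the computed value $\NL_2^{(8)}=6$) completes the descent. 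Finally, Theorem~\ref{thm.NL(n-k)=NL(k)} together with Proposition~\ref{propo. NL1 equals 0} guarantees that no weights outside the range $2\le 2^i\le n/2$ need to be examined, which closes the argument.
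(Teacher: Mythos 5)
Your overall skeleton (a recursion $\NL_{2^i}^{(n)}\geqslant\NL_{2^{i-1}}^{(n/2)}$ descending to the base cases $\NL_2^{(m)}\geqslant 5$ and $\NL_4^{(8)}\geqslant 19$) is exactly the paper's, but your proposed proof of the recursion has a fatal structural flaw. You pair coordinates $(2t,2t+1)$, restrict to vectors whose support meets each pair in $0$ or $2$ points, and collapse; under this map $\rho_n^2$ does act as $\rho_m^1$ on $\F_2^{m}$, $m=n/2$, but the defining property of the family is $f(x^2)=f(x)+1$, hence $f(x^4)=f(x)$: $f$ is \emph{invariant}, not alternating, along $\rho_n^2$-orbits. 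So the induced function $g$ on $\F_2^{m}$ satisfies $g(\rho_m^1(y))=g(y)$, i.e.\ it is constant on cyclic orbits --- it is not a member of the half-size family, and your claim that the correspondence ``matches the even/odd-shift splitting defining $D_O$ with the corresponding splitting in $\F_2^m$'' is false: the even shifts of a paired vector sweep out \emph{all} shifts (both parities) of the collapsed vector, so no bound of the form $\nl$ or $\NL^{(m)}$ for the family can be invoked, and the deficit comparison you hope for does not exist. The obstacle you flag (that $\rho_n^1$ breaks the pairing) is not a technicality to be controlled but the reason the approach fails. The paper avoids it with a different doubling: the diagonal $R=\{(y,y)\mid y\in\F_2^{n/2},\ \mathrm{w_H}(y)=2^{i-1}\}$, for which $\rho_n^1(y,y)=\left(\rho_{n/2}^1(y),\rho_{n/2}^1(y)\right)$, so the \emph{one-step} shift, and with it the alternation, transfers intact; then $f$ restricted to $R$ literally equals some $g$ of the $(n/2)$-variable family, and the elementary monotonicity $\NL_E(f)\geqslant\NL_S(f)$ for $S\subseteq E$ gives $\NL_{2^i}(f)\geqslant\NL_R(f)=\NL_{2^{i-1}}(g)$ with no deficit bookkeeping at all (your exact-minimum formula via free signs $\epsilon_O$ is correct but unnecessary).

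The weight-$2$ base case is also not proved in your proposal: the statement $\sum_O|D_O(a)|\leqslant\binom{m}{2}-10$ uniformly in $a$ and in $m=2^{l'}\geqslant 8$ is precisely what has to be shown, and ``anchored by the computed value $\NL_2^{(8)}=6$'' gives no mechanism to extend a computation at $m=8$ to all $m$ (note also the recursion cannot help here, since $\NL_1(f)=0$ by Proposition~\ref{propo. NL1 equals 0}). The paper's device is again subset monotonicity: it fixes a $22$-element set $S\subseteq E_{n,2}$ of weight-$2$ vectors supported on the first eight coordinates, observes that only the $a=(a_1,\ldots,a_8,0,\ldots,0)$ matter on $S$, and verifies by a finite computation, independent of $n$, that $\NL_S(f)\geqslant 5$ for every sign pattern of the family, whence $\NL_2^{(n)}\geqslant 5$ for all $n\geqslant 8$. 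If you replace your pair-collapsing by the diagonal embedding and your weight-$2$ claim by such a finite restriction argument, your outline becomes the paper's proof.
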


\begin{proof}
We first prove that for any $n\geqslant 8$, $\NL_2^{(n)}\geqslant 5$. Note that for a Boolean function $f$, it is clear that $\NL_E(f)\geqslant \NL_S(f)$ if $S\subseteq E$. Let
    \begin{align*}
        S=\{&(1,1,0,0,0,0,0,0,0,\ldots,0),(0,1,1,0,0,0,0,0,0,\ldots,0),\\
            &(0,0,1,1,0,0,0,0,0,\ldots,0),(0,0,0,1,1,0,0,0,0,\ldots,0),\\
            &(0,0,0,0,1,1,0,0,0,\ldots,0),(0,0,0,0,0,1,1,0,0,\ldots,0),\\
            &(0,0,0,0,0,0,1,1,0,\ldots,0),\\
            &(1,0,1,0,0,0,0,0,0,\ldots,0),(0,1,0,1,0,0,0,0,0,\ldots,0),\\
            &(0,0,1,0,1,0,0,0,0,\ldots,0),(0,0,0,1,0,1,0,0,0,\ldots,0),\\
            &(0,0,0,0,1,0,1,0,0,\ldots,0),(0,0,0,0,0,1,0,1,0,\ldots,0),\\
            &(1,0,0,1,0,0,0,0,0,\ldots,0),(0,1,0,0,1,0,0,0,0,\ldots,0),\\
            &(0,0,1,0,0,1,0,0,0,\ldots,0),(0,0,0,1,0,0,1,0,0,\ldots,0),\\
            &(0,0,0,0,1,0,0,1,0,\ldots,0),\\
            &(1,0,0,0,1,0,0,0,0,\ldots,0),(0,1,0,0,0,1,0,0,0,\ldots,0),\\
            &(0,0,1,0,0,0,1,0,0,\ldots,0),(0,0,0,1,0,0,0,1,0,\ldots,0)\}
    \end{align*}
    It is obvious that $S$ is a subset of $E_{n,2}$ for any $n\geqslant 8$, and there exist $4$ subsets of distinct orbits respectively in $S$.  Since for any $x\in S$, $x=(x_1,\ldots,x_8,0,\ldots,0)$, then we only need to consider $a\cdot x$ with $a=(a_1,\ldots,a_8,0,\ldots,0)$ as linear functions on $S$. By some computations, we find that for any $f$ in (\ref{corol.equ.trace form}), $\NL_S(f)\geqslant 5$, and thus $\NL_2(f)\geqslant 5$. Hence, $\NL_2^{(n)}\geqslant 5$.

    For any $n=2^l\geqslant 8$ and any $2\leqslant i\leqslant l-1$, let $$R=\Big\{(y,y)\mid y\in\F_2^{n/2}, \mathrm{w_H}(y)=2^{i-1}\Big\}.$$ Clearly, for any $x=(y,y)\in R$, the orbit generated by $x$ satisfies $G_{2^i}^{(l)}(x)=G_{2^{i-1}}^{(l)}(y)$. Then, for any WPB function $f$ over $\F_{2^n}$ in (\ref{corol.equ.trace form}), there must exist a WPB function $g$ over $\F_{2^{n/2}}$ in (\ref{corol.equ.trace form}) such that $f(x)=g(y)$, where $x=(y,y)\in R$. Since $R\subseteq E_{n,2^i}$ and $E_{n/2,2^{i-1}}=\{y\in\F_2^{n/2}, \mathrm{w_H}(y)=2^{i-1}\}$, then $\NL_{2^i}(f)\geqslant \NL_R(f)=\NL_{2^{i-1}}(g)$, which leads to
    \begin{align*}
        \NL_{2^{i}}^{(n)}\geqslant \NL_{2^{i-1}}^{(n/2)}\geqslant \cdots \geqslant
        \left\{\begin{array}{ll}
        \NL_{2}^{(n/2^{i-1})}\geqslant 5, & \mbox{if} ~2\leqslant i< l-1,\\
        \NL_{4}^{(8)}\geqslant 19, & \mbox{if}~ i=l-1,
        \end{array}\right.
    \end{align*}
    where $\NL_{4}^{(8)}\geqslant 19$ is according to Table~\ref{weightwise non. pro. n=8}.
\end{proof}


\section{A primary construction of WPB functions with high weightwise nonlinearity profile}\label{Sec: A primary con.}

In this section, we propose Construction-1 as a subclass of WPB functions in (\ref{corol.equ.trace form}), and then we prove that these WPB functions have high weightwise nonlinearity profile.
From Theorem~\ref{thm.three equivalent forms} and the proof of Theorem~\ref{thm.f=f2 is WPB}, we know that to obtain a WPB function $f$ in (\ref{corol.equ.trace form}), one only needs to define the values of $f$ on all the representative elements of the orbits in $\F_{2}^n$.
So, in Construction-1 below, we only define $f$ on the representative elements of the orbits in $\F_{2}^n$.
Recall that $\Omega_k$ denotes the set of all the representative elements with Hamming weight $k$ in $\F_{2}^n$.
By Lemma~\ref{lem.Con-1 is reasonable}, we give more explanations on Construction-1.


\begin{table}[t!]
\begin{center}
{\noindent\begin{tabular}{|l|}
  \hline
  Construction-1.\\
  \hline
  Input: Parameter $n=2^l$, $l\geqslant 3$.\\
  Output: An $n$-variable WPB function $f$.\\
  \hline
\textbf{1.} If $n=8$, then output any function in (\ref{corol.equ.trace form}) with constraint in (\ref{example.constraint on f with n=8}).\\
\textbf{2.} If $n\geqslant 16$, let $f(0)=0$, $f(1)=1$. Define $Y_k=\Big\{y\in\F_{2}^{n/4-1}\mid \mathrm{w_H}(y)=\lceil k/2\rceil-2\Big\}$ for\\
~~~  $3\leqslant k\leqslant n-1$, then\\
      ~~~\textbf{2.1.} Suppose $k=2$. For $x\in \Omega_2$, $f(x)$ is chosen randomly in $\F_2$.\\
      ~~~\textbf{2.2.} Suppose $k=2i-1$, $2\leqslant i\leqslant n/2$. For $y_1\in Y_k$, let\\
      ~~~~~~~~~~~~~~~$R_{y_1}=\Big\{(1,y_1,\mathbf{0},y_2)\mid  y_2\in\F_2^{n/2},\mathrm{w_H}(y_2)=i, \mathbf{0}\in\F_{2}^{n/4}\Big\}$,\\
      ~~~~~~~~~~~and for $x\in R_{y_1}$, define $f(x)=g(y_2)$, where $g$ is an $n/2$-variable function in\\
      ~~~~~~~~~~~Construction-1. For $x\in \Omega_k\setminus\bigcup_{y_1\in Y_k} R_{y_1}$, $f(x)$ is chosen randomly in $\F_2$.\\
      ~~~\textbf{2.3.} Suppose $k=2i$, $2\leqslant i\leqslant n/2$. For $y_1\in Y_k$, let\\
      ~~~~~~~~~~$T_{y_1}=\Big\{(1,y_1,\mathbf{0},y_2)\mid  y_2\in\F_2^{n/2},\mathrm{w_H}(y_2)=i+1, \mathbf{0}\in\F_{2}^{n/4}\Big\}$,\\
      ~~~~~~~~~~$S_{y_1}=\Big\{(1,y_1,\mathbf{0},1,y_2)\mid y_2\in\F_2^{n/2},\mathrm{w_H}(y_2)=i, y_2\ne(y,\mathbf{0},1,1), y\in Y_k, \mathbf{0}\in\F_{2}^{n/4-1}\Big\}$,\\
      ~~~~~~~~~~and for $x_1=(1,y_1,\mathbf{0},y_2)\in T_{y_1}$, $x_2=(1,y_1,\mathbf{0},1,z_2)\in S_{y_1}$, define $f(x_1)=g_1(y_2)$, \\ ~~~~~~~~~~$f(x_2)=g_2(z_2)$, where $g_1$, $g_2$
      are $n/2$-variable functions in Construction-1. For\\
      ~~~~~~~~~~$x\in \Omega_k\setminus \bigcup_{y_1\in Y_k}(T_{y_1}\bigcup S_{y_1})$, $f(x)$ is chosen randomly in $\F_2$.\\
  \hline
\end{tabular}}
\end{center}
\end{table}


\begin{lemma}\label{lem.Con-1 is reasonable}
    Construction-1 outputs an $n$-variable WPB function.
\end{lemma}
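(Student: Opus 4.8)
The plan is to argue by strong induction on $l$, where $n=2^l$, after reducing the WPB property to a purely combinatorial well-definedness statement via Theorem~\ref{thm.f=f2 is WPB}. Recall from that theorem, together with the proof of Theorem~\ref{thm.three equivalent forms}, that a function with $f(0)=0$ and $f(\mathbf{1})=1$ is WPB as soon as it satisfies $f(x^2)=f(x)+1$ for all $x\in\F_{2^n}\setminus\{0,\mathbf{1}\}$, and that such an $f$ is obtained by fixing $f$ arbitrarily on a single element of each orbit of $\F_{2^n}\setminus\{0,\mathbf{1}\}$ and extending by the rule $f(x^2)=f(x)+1$; this extension is automatically consistent because every nontrivial orbit has even length $l\mid n$. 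Consequently, the entire content of the lemma is that Construction-1 assigns to each orbit of each weight class $E_{n,k}$, $2\leqslant k\leqslant n-1$, exactly one well-defined value in $\F_2$ (the particular values chosen being irrelevant to the WPB property), together with $f(0)=0$, $f(\mathbf 1)=1$.

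The base case $n=8$ is immediate: the output is a function of the form (\ref{corol.equ.trace form}), which is WPB by Corollary~\ref{corol.trace form}, the additional constraint (\ref{example.constraint on f with n=8}) merely restricting the choice among functions that are already WPB. For the inductive step $n\geqslant 16$ I assume that every $n/2$-variable output of Construction-1 is a well-defined WPB function, and then treat the three cases of the construction in turn. First, a routine block-weight count confirms that every vector listed in $R_{y_1}$, $T_{y_1}$ and $S_{y_1}$ has Hamming weight exactly $k$: since $\w_{\mathrm{H}}(y_1)=\lceil k/2\rceil-2$, one checks in each case that $1+(\lceil k/2\rceil-2)+\w_{\mathrm{H}}(y_2)=k$ for the pattern $(1,y_1,\mathbf 0,y_2)$, and likewise $1+(\lceil k/2\rceil-2)+1+\w_{\mathrm{H}}(y_2)=k$ for the $S_{y_1}$-pattern $(1,y_1,\mathbf 0,1,y_2)$ with its two prescribed ones. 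Second, the recursive assignments $f(x)=g(y_2)$, $f(x_1)=g_1(y_2)$ and $f(x_2)=g_2(z_2)$ are well-defined because, by the induction hypothesis, $g,g_1,g_2$ are genuine $n/2$-variable WPB functions and hence return a single value in $\F_2$.

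The crux, and the step I expect to be the main obstacle, is the orbit bookkeeping: showing that for each $k$ the structured vectors $(1,y_1,\mathbf 0,y_2)$ (and the $S_{y_1}$-variant $(1,y_1,\mathbf 0,1,y_2)$) lie in pairwise distinct orbits and can be taken as the orbit representatives, so that $\bigcup_{y_1\in Y_k}R_{y_1}\subseteq\Omega_k$ (resp. $\bigcup_{y_1\in Y_k}(T_{y_1}\cup S_{y_1})\subseteq\Omega_k$) and the remaining representatives in $\Omega_k\setminus\bigcup_{y_1}R_{y_1}$ receive the random values, giving a value to every weight-$k$ orbit without conflict. The structural feature to exploit is the central run of $n/4$ consecutive zeros together with the leading coordinate pattern: I would argue that no nontrivial cyclic shift can carry one such vector onto another of the same prescribed form, which forces the map $(y_1,y_2)\mapsto G_k^{(l)}\!\big((1,y_1,\mathbf 0,y_2)\big)$ to be injective and singles out a canonical element in each orbit it meets. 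The genuinely delicate point is the even case $k=2i$, where one must verify that $T_{y_1}$ and $S_{y_1}$ are disjoint and meet disjoint families of orbits; this is precisely what the exclusion clause $y_2\neq(y,\mathbf 0,1,1)$ in the definition of $S_{y_1}$ is engineered to ensure, and checking that it removes exactly the overlaps with $T_{y_1}$ — neither too few nor too many — is the technical heart of the argument. Once this partition of the weight-$k$ representatives is established for every $k$, the function $f$ is defined consistently on every orbit, and Theorem~\ref{thm.f=f2 is WPB} yields that the output is WPB, completing the induction.
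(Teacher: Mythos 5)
Your reduction is sound and matches the paper's overall strategy: by Theorem~\ref{thm.f=f2 is WPB} together with Theorem~\ref{thm.three equivalent forms}, it suffices to check that Construction-1 prescribes at most one value per orbit of $\F_2^n$, your block-weight counts for $R_{y_1}$, $T_{y_1}$, $S_{y_1}$ are correct, and the base case $n=8$ is handled exactly as in the paper. But the proposal stops precisely where the paper's proof begins: the claim that the elements of $\bigcup_{y_1\in Y_k}R_{y_1}$ (resp.\ of $\bigcup_{y_1\in Y_k}(T_{y_1}\cup S_{y_1})$) lie in pairwise distinct orbits is asserted (``I would argue that no nontrivial cyclic shift can carry one such vector onto another of the same prescribed form'') but never proved, and this claim is essentially the entire content of the lemma --- the paper's proof consists of nothing else. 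The paper establishes it by explicit case analysis: assuming $x_1=\rho_n^j(x_2)$ with both vectors in the structured family, the leading $1$ of $x_1$ must align either inside $y_1$, inside $y_2$, or at one of the prescribed $1$'s, and in each case a Hamming-weight count on the block that lands in the $y_1$-slot (which is forced to have weight $\lceil k/2\rceil-2$) yields a contradiction. This is not automatic from the run of $n/4$ (or $n/4-1$) zeros: $y_2\in\F_2^{n/2}$ is unconstrained apart from its weight and may itself contain an equally long run of zeros; indeed the element of $S_{y_1}$ with $z_2=(1,y_1,\mathbf{0},1)$ satisfies $\rho_n^{n/2}(x)=x$, so a nontrivial shift \emph{can} map a vector of the prescribed form to a vector of the prescribed form, and only the weight bookkeeping rules out collisions between \emph{distinct} elements.

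There is also a local slip in the one place where you do commit to a mechanism: you attribute the exclusion clause $y_2\neq(y,\mathbf{0},1,1)$ to separating $T_{y_1}$ from $S_{y_1}$. In the paper's proof the $T$-versus-$S$ collisions are eliminated purely by weight counts (an aligned block of weight at least $i$ would have to occupy a slot where weight $i-1$ is forced); the exclusion clause is instead what kills the $S$-versus-$S$ wrap-around case $\rho_n^j(x_2)=(1,y_2,1,y_1,\mathbf{0})$, where matching the pattern $(1,z_1,\mathbf{0},1,z_2)$ forces $y_2=(z_1,\mathbf{0},1,1)$ --- exactly the excluded shape. So the one piece of the orbit analysis you gesture at is aimed at the wrong pair of families. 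In sum: correct framing and correct reduction, but the technical heart of the paper's proof --- the shift-alignment case analysis for $R$, for $S$ against $S$, and for $T$ against $S$ --- is missing, so the proposal as written does not establish the lemma.
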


\begin{proof}
    We prove that for any $k\geqslant 3$, $\bigcup_{y_1\in Y_k}R_{y_1}$ and $\bigcup_{y_1\in Y_k}(T_{y_1}\bigcup S_{y_1})$ consist of distinct orbits in $\F_{2}^n$. Thus, by Construction-1, we can define a WPB function which has form (\ref{corol.equ.trace form}).

For $\bigcup_{y_1\in Y_k}R_{y_1}$, suppose that there exists some $j\geqslant 1$ such that $x_1=\rho_n^j(x_2)$, where $x_1,x_2\in \bigcup_{y_1\in Y_k}R_{y_1}$. Then, since the first coordinate of $x_1$ is $1$, then it must be the case that $\rho_n^j(x_2)=(y_1'',\mathbf{0}, y_2,1,y_1')$ or $\rho_n^j(x_2)=(y_2'',1,y_1,\mathbf{0},y_2')$, where $x_2=(1,y_1,\mathbf{0},y_2)$, $\mathbf{0}\in\F_2^{n/4}$, $y_1'\|y_1''=y_1\in\F_2^{n/4-1}$, $y_2'\|y_2''=y_2\in\F_2^{n/2}$, $\mathrm{w_H}(y_1)=i-2$, $\mathrm{w_H}(y_2)=i$, and $\|$ means the concatenation of two vectors.
\begin{itemize}
  \item Suppose $\rho_n^j(x_2)=(y_1'',\mathbf{0}, y_2,1,y_1')=(1,z_1,\mathbf{0},z_2)=x_1\in R_{z_1}$. Since $\mathrm{w_H}(y_1'')\leqslant i-2$, and the first coordinate of $y_1''$ is 1, then $\mathrm{w_H}(z_1)\leqslant i-3$, which contradicts with $\mathrm{w_H}(z_1)=i-2$.

  \item Suppose $\rho_n^j(x_2)=(y_2'',1,y_1,\mathbf{0},y_2')=(1,z_1,\mathbf{0},z_2)=x_1\in R_{z_1}$.
  Since $\mathrm{w_H}(1,y_1)=i-1$, and the first coordinate of $y_2''$ is 1, then $\mathrm{w_H}(z_1)\geqslant i-1$, which contradicts with $\mathrm{w_H}(z_1)=i-2$.
\end{itemize}
Therefore, all the elements in $\bigcup_{y_1\in Y_k}R_{y_1}$ belong to different orbits in $\F_{2}^n$.

For $\bigcup_{y_1\in Y_k}(T_{y_1}\bigcup S_{y_1})$, since we can prove similarly that all the elements in $\bigcup_{y_1\in Y_k}T_{y_1}$ belong to different orbits in $\F_{2}^n$, then we only consider the following two cases.

1. Suppose that there exists some $j\geqslant 1$ such that $x_1=\rho_n^j(x_2)$, where $x_1,x_2\in \bigcup_{y_1\in Y_k}S_{y_1}$. Then, since the first coordinate of $x_1$ is $1$, then it must be the case that $\rho_n^j(x_2)=(1,y_2,1,y_1,\mathbf{0})$, $\rho_n^j(x_2)=(y_1'',\mathbf{0}, 1, y_2,1,y_1')$, or $\rho_n^j(x_2)=(y_2'',1,y_1,\mathbf{0},1,y_2')$, where $x_2=(1,y_1,\mathbf{0},1,y_2)$, $\mathbf{0}\in\F_2^{n/4-1}$, $y_1'\|y_1''=y_1\in\F_2^{n/4-1}$, $y_2'\|y_2''=y_2\in\F_2^{n/2}$, $\mathrm{w_H}(y_1)=i-2$, $\mathrm{w_H}(y_2)=i$.
\begin{itemize}
  \item Suppose $\rho_n^j(x_2)=(1,y_2,1,y_1,\mathbf{0})=(1,z_1,\mathbf{0},1,z_2)=x_1\in S_{z_1}$. Let $y_2=(b_1,b_2)\in\F_2^{n/2-1}\times\F_2$. Since $\mathrm{w_H}(1,b_1)=\mathrm{w_H}(1,z_1,\mathbf{0},1)=i$ and $\mathrm{w_H}(y_2)=i$, then $b_2=1$, and thus $y_2=(z_1,\mathbf{0},1,1)$, which contradicts with the condition $y_2\ne (y,\mathbf{0},1,1)$ for $y\in Y_k$.
  \item Suppose $\rho_n^j(x_2)=(y_1'',\mathbf{0}, 1, y_2,1,y_1')=(1,z_1,\mathbf{0},1,z_2)=x_1\in S_{z_1}$. Since $\mathrm{w_H}(y_1'')\leqslant i-2$, and the first coordinate of $y_1''$ is 1, then $\mathrm{w_H}(y_1)\leqslant i-3$, which contradicts with $\mathrm{w_H}(y_1)=i-2$.

  \item Suppose $\rho_n^j(x_2)=(y_2'',1,y_1,\mathbf{0},1,y_2')=(1,z_1,\mathbf{0},1,z_2)=x_1\in S_{z_1}$.
  Since $\mathrm{w_H}(1,y_1)=i-1$, and the first coordinate of $y_2''$ is 1, then $\mathrm{w_H}(y_1)\geqslant i-1$, which contradicts with $\mathrm{w_H}(y_1)=i-2$.
\end{itemize}
Therefore, all the elements in $\bigcup_{y_1\in Y_k}S_{y_1}$ belong to different orbits in $\F_{2}^n$.

2. Suppose that there exists some $j\geqslant 1$ such that $x_1=\rho_n^j(x_2)$, where $x_1\in \bigcup_{y_1\in Y_k}T_{y_1}$, $x_2\in \bigcup_{y_1\in Y_k}S_{y_1}$. Then, since the first coordinate of $x_1$ is $1$, then it must be the case that $\rho_n^j(x_2)=(1,y_2,1,y_1,\mathbf{0}_2)$, $\rho_n^j(x_2)=(y_1'',\mathbf{0}_2, 1, y_2,1,y_1')$, or $\rho_n^j(x_2)=(y_2'',1,y_1,\mathbf{0}_2,1,y_2')$, where $x_2=(1,y_1,\mathbf{0},1,y_2)$, $\mathbf{0}_2\in\F_2^{n/4-1}$, $y_1'\|y_1''=y_1\in\F_2^{n/4-1}$, $y_2'\|y_2''=y_2\in\F_2^{n/2}$, $\mathrm{w_H}(y_1)=i-2$, $\mathrm{w_H}(y_2)=i$.
\begin{itemize}
  \item Suppose $\rho_n^j(x_2)=(1,y_2,1,y_1,\mathbf{0}_2)=(1,z_1,\mathbf{0}_1,1,z_2)=x_1\in T_{z_1}$, where $\mathbf{0}_1\in\F_2^{n/4}$. Let $y_2=(b_1,b_2)\in\F_2^{n/2-1}\times\F_2$, then $(1,b_1)=(1,z_1,\mathbf{0}_1)$.
      Since $\mathrm{w_H}(y_2)=i$, then $\mathrm{w_H}(1,b_1)\geqslant i$, which contradicts with $\mathrm{w_H}(1,z_1,\mathbf{0}_1)=i-1$.
  \item Suppose $\rho_n^j(x_2)=(y_1'',\mathbf{0}_2, 1, y_2,1,y_1')=(1,z_1,\mathbf{0},1,z_2)=x_1\in T_{z_1}$, where $\mathbf{0}_1\in\F_2^{n/4}$. Since $\mathrm{w_H}(y_1'')\leqslant i-2$, and the first coordinate of $y_1''$ is 1, then $\mathrm{w_H}(z_1)\leqslant i-3$, which contradicts with $\mathrm{w_H}(z_1)=i-2$.

  \item Suppose $\rho_n^j(x_2)=(y_2'',1,y_1,\mathbf{0},1,y_2')=(1,y_1,\mathbf{0},1,z_2)=x_1\in T_{z_1}$, where $\mathbf{0}_1\in\F_2^{n/4}$.
  Since $\mathrm{w_H}(1,y_1)=i-1$, and the first coordinate of $y_2''$ is 1, then $\mathrm{w_H}(y_1)\geqslant i-1$, which contradicts with $\mathrm{w_H}(y_1)=i-2$.
\end{itemize}
Therefore, $\bigcup_{y_1\in Y_k}T_{y_1}$ and $\bigcup_{y_1\in Y_k}S_{y_1}$ consist of different orbits in $\F_{2}^n$.
In conclusion, for any $k\geqslant 3$,  $\bigcup_{y_1\in Y_k}R_{y_1}$ and $\bigcup_{y_1\in Y_k}(T_{y_1}\bigcup S_{y_1})$ consist of distinct orbits in $\F_{2}^n$.
\end{proof}

We now discuss about the lower bounds on weightwise nonlinearity profile of WPB functions in Construction-1. Let $\nl_k^{(n)}$ denote the lower bound on $k$-weight nonlinearity for all $n$-variable WPB functions in Construction-1, i.e., for any $n$-variable WPB function $f$ in Construction-1, $\nl_k(f)\geqslant \nl_k^{(n)}$.
Similar to the proof of Theorem~\ref{thm.NL(n-k)=NL(k)}, we can obtain $\nl_k^{(n)}=\nl_{n-k}^{(n)}$. Hence, in the following, one only needs to consider $\nl_k^{(n)}$ for $2\leqslant k\leqslant n/2$.

Before present the lower bound in Theorem~\ref{thm.lower bound on NL recursively}, we first see the following lemma.

\begin{lemma}\label{lem.NL(f(E))=NL(f(rho(E)))}
    For a set $E\subseteq \F_2^n$ and $0\leqslant j\leqslant n-1$, define $\rho_n^j(E)=\{\rho_n^j(x)\mid x\in E\}$. Then, if an $n$-variable Boolean function $f$ satisfies $f(x)=f(\rho_n^j(x))$ for all $x\in E$, then $\NL_{\rho_n^j(E)}(f)=\NL_E(f)$.
\end{lemma}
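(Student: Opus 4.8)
The plan is to apply the spectral formula for restricted nonlinearity from Proposition~\ref{prop.restricted nonlineariy} to both $\NL_{\rho_n^j(E)}(f)$ and $\NL_E(f)$, and to show that the two Walsh-type maxima coincide. First I would record three elementary facts about the shift operator. The map $\rho_n^j$ is a bijection of $\F_2^n$, so $|\rho_n^j(E)|=|E|$. It permutes coordinates, hence it has an adjoint acting on the dual variable, namely $a\cdot\rho_n^j(x)=\rho_n^{n-j}(a)\cdot x$ for all $a,x\in\F_2^n$; the case $j=1$ of this identity, $\rho_n^1(a)\cdot x=a\cdot\rho_n^{n-1}(x)$, has already been used earlier in the paper. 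Finally, by the hypothesis of the lemma, $f(\rho_n^j(x))=f(x)$ for every $x\in E$.

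Next I would expand $\NL_{\rho_n^j(E)}(f)$ via Proposition~\ref{prop.restricted nonlineariy} and perform the change of variables $y=\rho_n^j(x)$ in the character sum. Since $\rho_n^j$ is a bijection, as $x$ runs over $E$ the point $y=\rho_n^j(x)$ runs over $\rho_n^j(E)$; substituting $f(y)=f(\rho_n^j(x))=f(x)$ and $a\cdot y=a\cdot\rho_n^j(x)=\rho_n^{n-j}(a)\cdot x$ converts the sum over $\rho_n^j(E)$ against the character $a$ into the sum over $E$ against the character $\rho_n^{n-j}(a)$, that is,
\[
\sum_{y\in\rho_n^j(E)}(-1)^{f(y)\oplus a\cdot y}=\sum_{x\in E}(-1)^{f(x)\oplus\rho_n^{n-j}(a)\cdot x}.
\]

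Finally, I would take the absolute value and maximize over $a\in\F_2^n$. Because $a\mapsto\rho_n^{n-j}(a)$ is itself a bijection of $\F_2^n$, setting $b=\rho_n^{n-j}(a)$ and letting $b$ range over all of $\F_2^n$ shows that $\max_{a}\big|\sum_{y\in\rho_n^j(E)}(-1)^{f(y)\oplus a\cdot y}\big|=\max_{b}\big|\sum_{x\in E}(-1)^{f(x)\oplus b\cdot x}\big|$. Combining this equality of maxima with $|\rho_n^j(E)|=|E|$ in Proposition~\ref{prop.restricted nonlineariy} gives $\NL_{\rho_n^j(E)}(f)=\NL_E(f)$, as desired.

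There is no genuinely hard step here; the argument is a double use of bijectivity, once on the domain $E$ and once on the dual variable $a$. The only point that demands care is pinning down the exact form of the adjoint identity $a\cdot\rho_n^j(x)=\rho_n^{n-j}(a)\cdot x$, i.e.\ getting the direction of the shift on $a$ correct, since once that is in place the invariance of the maximum under a coordinate shift of $a$ is immediate.
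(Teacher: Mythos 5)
Your proof is correct and follows essentially the same route as the paper's: the change of variables $y=\rho_n^j(x)$ combined with the adjoint identity $a\cdot\rho_n^j(x)=\rho_n^{n-j}(a)\cdot x$, the hypothesis $f(\rho_n^j(x))=f(x)$, and bijectivity of $a\mapsto\rho_n^{n-j}(a)$ to equate the two maxima in Proposition~\ref{prop.restricted nonlineariy}. You are in fact slightly more explicit than the paper in justifying why the maximum over $a$ is invariant under the dual shift, but the argument is the same.
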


\begin{proof}
    For any $a,x\in\F_2^n$ and $0\leqslant j\leqslant n-1$, we have $a\cdot \rho_n^j(x)=\rho_n^{n-j}(a)\cdot x$.
    Since $f(x)=f(\rho_n^j(x))$ for all $x\in E$, then
    \begin{align}\label{lem.eq.|p(E)|=|E|}
        \sum_{x\in \rho_n^j(E)}(-1)^{f(x)\oplus a\cdot x}=\sum_{x\in E}(-1)^{f\left( \rho_n^j(x)\right)\oplus a\cdot  \rho_n^j(x)}
    =\sum_{x\in E}(-1)^{f(x)\oplus \rho_n^{n-j}(a)\cdot x}.
    \end{align}
    According to (\ref{lem.eq.|p(E)|=|E|}), we have
    \begin{align*}
        \NL_{\rho_n^j(E)}(f)&=\frac{1}{2}|\rho_n^j(E)|-\frac{1}{2}\max_{a\in\F_2^n}\Bigg|\sum_{x\in \rho_n^j(E)}(-1)^{f(x)\oplus a\cdot x}\Bigg|\\
        &=\frac{1}{2}|E|-\frac{1}{2}\max_{a\in\F_2^n}\bigg|\sum_{x\in E}(-1)^{f(x)\oplus a\cdot x}\bigg|=\NL_E(f).
    \end{align*}

\end{proof}

\begin{theorem}\label{thm.lower bound on NL recursively}
    For $n\geqslant 8$ and $2\leqslant i\leqslant n/4$, we have the following lower bound on weighwise nonlinearity profile recursively,
    \begin{align*}
        \nl_2^{(n)}&\geqslant 5,\\
        \nl_{2i-1}^{(n)}&\geqslant n{n/4-1\choose i-2}\nl_i^{(n/2)},\\
        \nl_{2i}^{(n)}&\geqslant \frac{n}{2}{n/4-1\choose i-2}\left(2\nl_i^{(n/2)}-2{n/4-1\choose i-2}-1\right)+n{n/4-1\choose i-2}\nl_{i+1}^{(n/2)}.
    \end{align*}
\end{theorem}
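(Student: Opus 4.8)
The plan is to build, inside each level set $E_{n,k}$, an explicit disjoint union of blocks on which the value of $f$ is governed by the smaller Construction-1 functions $g,g_1,g_2$ of $n/2$ variables, and then to estimate $\NL_k(f)$ block by block. Three elementary facts drive everything. First, if $E=\bigsqcup_m E_m$ is a disjoint union, then for each fixed $a$ the Walsh sum $\sum_{x\in E}(-1)^{f(x)\oplus a\cdot x}$ splits as the sum of the Walsh sums over the $E_m$, so the triangle inequality gives $\max_a|\cdot|\le\sum_m\max_a|\cdot|$ and hence $\NL_E(f)\ge\sum_m\NL_{E_m}(f)$; combined with $\NL_E(f)\ge\NL_S(f)$ for $S\subseteq E$, this lets me add contributions of blocks. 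Second, Lemma~\ref{lem.NL(f(E))=NL(f(rho(E)))} gives $\NL_{\rho_n^j(E)}(f)=\NL_E(f)$ whenever $f$ is shift-covariant on $E$; since $f(\rho_n^j(x))=f(x)$ for even $j$ and $f(\rho_n^j(x))=f(x)\oplus1$ for odd $j$ (a global complement not affecting nonlinearity), every cyclic shift of a block has the same restricted nonlinearity as the block. Third, deleting a set $X$ costs at most $|X|$, i.e.\ $\NL_{E\setminus X}(f)\ge\NL_E(f)-|X|$, again by the triangle inequality on Walsh sums.

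The key local observation is that on a block such as $R_{y_1}=\{(1,y_1,\mathbf{0},y_2):\mathrm{w_H}(y_2)=i\}$ the value $f(1,y_1,\mathbf{0},y_2)=g(y_2)$ depends only on $y_2$, and every linear form $a\cdot x$ restricts to $c\oplus a_2\cdot y_2$ with $c$ constant in $y_2$. Hence, up to a sign and a relabelling of $a$, the Walsh spectrum of $f$ over $R_{y_1}$ coincides with that of $g$ over $E_{n/2,i}$, so $\NL_{R_{y_1}}(f)=\NL_{E_{n/2,i}}(g)\ge\nl_i^{(n/2)}$; identically $\NL_{T_{y_1}}(f)\ge\nl_{i+1}^{(n/2)}$, and after deleting the $\binom{n/4-1}{i-2}$ excluded words, $\NL_{S_{y_1}}(f)\ge\nl_i^{(n/2)}-\binom{n/4-1}{i-2}$.

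For the assembly I pin down orbit lengths. In $R_{y_1}$ and $T_{y_1}$ the block of $n/4$ consecutive zeros together with the leading $1$ forbids every period $l\mid n$ with $l<n$ (a run of $\ge l$ zeros meets the residue class of the leading $1$, while a period $n/2$ is killed by the weight mismatch $\mathrm{w_H}(y_2)\ne i-1$), so all these orbits have full length $n$. For the odd case $k=2i-1$ I take $W=\bigcup_{y_1\in Y_k}R_{y_1}$; by Lemma~\ref{lem.Con-1 is reasonable} its members lie in distinct full-length orbits, so the $n$ shifts $\rho_n^j(W)$ are pairwise disjoint subsets of $E_{n,k}$ and $\NL_{2i-1}(f)\ge\sum_{j=0}^{n-1}\NL_{\rho_n^j(W)}(f)=n\,\NL_W(f)\ge n\binom{n/4-1}{i-2}\nl_i^{(n/2)}$. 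The even case $k=2i$ splits into the $T$-part, handled exactly as above to give the summand $n\binom{n/4-1}{i-2}\nl_{i+1}^{(n/2)}$, and the $S$-part. In $S_{y_1}$ exactly one word $x^\ast=(w,w)$ with $w=(1,y_1,\mathbf{0},1)$ is $n/2$-periodic while all others have full length $n$, and this periodic word is precisely the intersection $S_{y_1}\cap\rho_n^{n/2}(S_{y_1})$ forced by Lemma~\ref{lem.Con-1 is reasonable}. Thus $S_{y_1}\cup\rho_n^{n/2}(S_{y_1})=S_{y_1}\sqcup\big(\rho_n^{n/2}(S_{y_1})\setminus\{x^\ast\}\big)$ has restricted nonlinearity $\ge2\big(\nl_i^{(n/2)}-\binom{n/4-1}{i-2}\big)-1$. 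Writing $V=\bigcup_{y_1}\big(S_{y_1}\cup\rho_n^{n/2}(S_{y_1})\big)$, which is $\rho_n^{n/2}$-invariant, its $n/2$ shifts $\rho_n^j(V)$ with $0\le j<n/2$ are pairwise disjoint, so the $S$-part contributes $\tfrac n2\NL_V(f)\ge\tfrac n2\binom{n/4-1}{i-2}\big(2\nl_i^{(n/2)}-2\binom{n/4-1}{i-2}-1\big)$. As $S$-orbits and $T$-orbits are distinct, adding the two contributions yields the claimed bound for $\nl_{2i}^{(n)}$; finally $\nl_2^{(n)}\ge\NL_2^{(n)}\ge5$ because Construction-1 functions belong to the family~(\ref{corol.equ.trace form}), and the base case $n=8$ is supplied by Table~\ref{weightwise non. pro. n=8}.

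The main obstacle will be the orbit-length bookkeeping in the $S$-part: one must verify that precisely one word of each $S_{y_1}$ is $n/2$-periodic and none shorter (the delicate point being the $l=n/4$ period, ruled out by the mismatch $(1,y_1)\ne(\mathbf{0},1)$ at the start of the second quarter), that this single word is the only overlap $S_{y_1}\cap\rho_n^{n/2}(S_{y_1})$, and that the $n/2$ shifts of $V$ are genuinely disjoint — all resting on the distinct-orbit analysis of Lemma~\ref{lem.Con-1 is reasonable}. It is exactly this bookkeeping that produces the factor $\tfrac n2$ in place of $n$ and the $-1$ inside the $S$-term.
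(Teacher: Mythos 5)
Your proposal is correct and follows essentially the same route as the paper's proof: the same blocks $R_{y_1}$, $T_{y_1}$, $S_{y_1}$ with $\NL_{R_{y_1}}(f)=\NL_i(g)$, the same deletion cost $\binom{n/4-1}{i-2}$ for the excluded words, the same orbit-length bookkeeping (full length $n$ except the single $n/2$-periodic word of each $S_{y_1}$), and the same shift-and-add assembly via Lemma~\ref{lem.NL(f(E))=NL(f(rho(E)))} and Lemma~\ref{lem.Con-1 is reasonable} — your $V=S_{y_1}\cup\rho_n^{n/2}(S_{y_1})$ packaging is just a reorganization of the paper's $S_1,S_2$ split and yields the identical count $\frac n2|Y_k|\bigl(2\nl_i^{(n/2)}-2\binom{n/4-1}{i-2}-1\bigr)$. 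Your explicit remark that odd shifts give $f(\rho_n^j(x))=f(x)\oplus 1$, so Lemma~\ref{lem.NL(f(E))=NL(f(rho(E)))} applies only up to a complementation that leaves restricted nonlinearity unchanged, is in fact slightly more careful than the paper, which invokes the lemma for all $0\leqslant j\leqslant n-1$ without noting this parity issue.
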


\begin{proof}
For any $n$-variable WPB function $f$ in Construction-1, We first consider the case $k=2i-1$ for some $2\leqslant i\leqslant n/4$.
Since for any $y_1\in Y_k$, $f(x)=g(y_2$) for all $x=(1,y_1,\mathbf{0},y_2)\in R_{y_1}$. So, we have
\begin{align}\label{thm.bound on nl, eq.1}
\NL_{R_{y_1}}(f)=\NL_i(g)\geqslant \nl_i^{(n/2)}.
\end{align}
From Lemma~\ref{lem.NL(f(E))=NL(f(rho(E)))}, we have $\NL_{\rho_n^j(R_{y_1})}(f)=\NL_{R_{y_1}}(f)$ for any $0\leqslant j\leqslant n-1$.
Because $k$ is odd, it is easy to prove that for any $x\in R_{y_1}$, the orbit generated by $x$ has length $n$.
Then, $|\bigcup_{j=0}^{n-1}\rho_n^j(R_{y_1})|=n|R_{y_1}|$, and thus (\ref{thm.bound on nl, eq.1}) leads to
\begin{align}\label{thm.bound on nl, eq.2}
    \NL_{\bigcup_{j=0}^{n-1}\rho_n^j(R_{y_1})}(f)\geqslant n\cdot \NL_{\rho_n^j(R_{y_1})}(f)=n\cdot \NL_{R_{y_1}}(f)\geqslant n\cdot \nl_i^{(n/2)}.
\end{align}
Let $Y=\bigcup_{y_1 \in Y_k}\bigcup_{j=0}^{n-1}\rho_n^j(R_{y_1})$, then according to (\ref{thm.bound on nl, eq.2}), we obtain
\begin{align*}
    \NL_Y(f)\geqslant |Y_k|\cdot \NL_{\bigcup_{j=0}^{n-1}\rho_n^j(R_{y_1})}(f)\geqslant n{n/4-1\choose i-2}\nl_i^{(n/2)}.
\end{align*}
Note that $Y\subseteq E_{n,k}$. Then, $\NL_k(f)\geqslant \NL_Y(f)$, and thus $\nl_{2i-1}^{(n)}\geqslant n{n/4-1\choose i-2}\nl_i^{(n/2)}$.

Let $k=2i$ with $2\leqslant i\leqslant n/4$.
Since for any $y_1\in Y_k$,
$f(x_1)=g_1(y_2)$ and $f(x_2)=g_2(z_2)$ for all $x_1=(1,y_1,\mathbf{0},y_2)\in T_{y_1}$ and $x_2=(1,y_1,\mathbf{0},1,z_2)\in S_{y_1}$.
So, we have
\begin{align}
\label{thm.bound on nl, eq.3}\NL_{T_{y_1}}(f)&=\NL_{i+1}(g_1)\geqslant \nl_{i+1}^{(n/2)},\\
\label{thm.bound on nl, eq.4}\NL_{S_{y_1}}(f)&=\NL_i(g_2)-|Y_k|\geqslant \nl_i^{(n/2)}-{n/4-1\choose i-2},
\end{align}
where (\ref{thm.bound on nl, eq.4}) is because $z_2\ne (y,\mathbf{0},1,1)$ for all $y\in Y_k$.
It is easy to prove that for any $x=(1,y_1,\mathbf{0},y_2)\in T_{y_1}$, the orbit generated by $x$ has length $n$, and for any $x=(1,y_1,\mathbf{0},1,z_2)\in S_{y_1}$, the orbit generated by $x$ has length $n/2$ if $z_2=(1,y_1,\mathbf{0},1)$, and $n$ otherwise.
Then, $|\bigcup_{j=0}^{n-1}\rho_n^j(T_{y_1})|=n\cdot|T_{y_1}|$, $|\bigcup_{j=0}^{n/2-1}\rho_n^j(S_{y_1})|=n/2\cdot|S_{y_1}|$, and $|\bigcup_{j=n/2}^{n-1}\rho_n^j(S_{y_1}\setminus e)|=n/2\cdot|S_{y_1}-1|$, where $e=(1,y_1,\mathbf{0},1,1,y_1,\mathbf{0},1)$. Define
\begin{align*}
    T&=\bigcup_{y_1\in Y_k}\bigcup_{j=0}^{n-1}\rho_n^j(T_{y_1}),\\
    S_1&=\bigcup_{y_1\in Y_k}\bigcup_{j=0}^{n/2-1}\rho_n^j(S_{y_1}),\\
    S_2&=\bigcup_{y_1\in Y_k}\bigcup_{j=n/2}^{n-1}\rho_n^j(S_{y_1}\setminus\{(1,y_1,\mathbf{0},1,1,y_1,\mathbf{0},1)\}).
\end{align*}
Then, we have
\begin{align}
\nonumber    \NL_{T\bigcup S_1\bigcup S_2}(f)\geqslant &\NL_{T}(f)+\NL_{S_1}(f)+\NL_{S_2}(f)\\
\nonumber    =&n\cdot |Y_k| \cdot \NL_{\rho_n^j(T_{y_1})}(f)+\frac{n}{2}\cdot|Y_k|\cdot\NL_{\rho_n^j(S_{y_1})}(f)\\
\nonumber &+\frac{n}{2}\cdot|Y_k|\cdot\left(\NL_{\rho_n^j(S_{y_1})}(f)-1\right)\\
\label{thm.bound on nl, eq.5}    =&n\cdot |Y_k| \cdot \NL_{T_{y_1}}(f)+\frac{n}{2}\cdot|Y_k|\cdot\NL_{S_{y_1}}(f)+\frac{n}{2}\cdot|Y_k|\cdot\left(\NL_{S_{y_1}}(f)-1\right)\\
 \nonumber   \geqslant& n{n/4-1\choose i-2}\nl_{i+1}^{(n/2)}+\frac{n}{2}{n/4-1\choose i-2}\left(\nl_i^{(n/2)}-{n/4-1\choose i-2}\right)\\
\label{thm.bound on nl, eq.6}&+\frac{n}{2}{n/4-1\choose i-2}\left(\nl_i^{(n/2)}-{n/4-1\choose i-2}-1\right)\\
 \nonumber =&  n{n/4-1\choose i-2}\nl_{i+1}^{(n/2)}+\frac{n}{2}{n/4-1\choose i-2}\left(2\nl_i^{(n/2)}-2{n/4-1\choose i-2}-1\right),
\end{align}
where (\ref{thm.bound on nl, eq.5}) is due to Lemma~\ref{lem.NL(f(E))=NL(f(rho(E)))}, and (\ref{thm.bound on nl, eq.6}) is from (\ref{thm.bound on nl, eq.3}) and (\ref{thm.bound on nl, eq.4}).
Note that $T\bigcup S_1\bigcup S_2\subseteq E_{n,k}$. Then, $\NL_k(f)\geqslant\NL_{T\bigcup S_1\bigcup S_2}(f)$, and thus \[
\nl_{2i}^{(n)}\geqslant \frac{n}{2}{n/4-1\choose i-2}\left(2\nl_i^{(n/2)}-2{n/4-1\choose i-2}-1\right)+n{n/4-1\choose i-2}\nl_{i+1}^{(n/2)}.
\]
\end{proof}

\begin{example}\label{example. 16-variable WPB function}
    We use Construction-1 to design a $16$-variable WPB function. Suppose that we choose an $8$-variable WPB function $g$ as the subfunction of $f$ claimed in Construction-1, where $g$ achieves the best weightwise nonlinearity profile in Example~\ref{example.n=8}, i.e., $\NL_2(g)=9$, $\NL_3(g)=22$, $\NL_4(g)=27$. According to Theorem~\ref{thm.lower bound on NL recursively}, if we set $nl_2^{(8)}=9$, $nl_3^{(8)}=22$, and $nl_4^{(8)}=27$, then we obtain the lower bounds on $\NL_k(f)$, $3\leqslant k\leqslant 8$. See Table~\ref{weightwise non. pro. n=16}.

\begin{table}[t!]
\caption{\label{weightwise non. pro. n=16} Lower bound on weightwise nonlinearity profile of $f$ with $n=16$ variables}
   \begin{center}
    \begin{tabular}{|l|c|}%
     \hline
     $k$-weight nonlinearity of $f$ & upper bound $\left\lfloor {n \choose k}/2-\sqrt{ {n \choose k}}/2\right\rfloor$  \\  \hline
     ~~~~~~~~$\NL_2(f)\geqslant 5$ & $54$\\
     ~~~~~~~~$\NL_3(f)\geqslant 144$ & $268$\\
     ~~~~~~~~$\NL_4(f)\geqslant 472$ & $888$\\
     ~~~~~~~~$\NL_5(f)\geqslant 1056$ & $2150$\\
     ~~~~~~~~$\NL_6(f)\geqslant 2184$ & $3959$\\
     ~~~~~~~~$\NL_7(f)\geqslant 1296$ & $5666$\\
     ~~~~~~~~$\NL_8(f)\geqslant 2184$ & $6378$\\
     \hline
   \end{tabular}
   \end{center}
 \end{table}
\end{example}


\begin{remark}
    Grain-128~\cite{Hell2006} is a variant stream cipher selected in the eSTREAM project. It is shown in~\cite{Carlet2016} that the $17$-variable generate Boolean function $h'$ of Grain-128 is not WPB, and thus is  vulnerable to distinguish cryptanalysis when the attacker can access to the Hamming weight of the input of $h'$, especially for the weight larger than $8$. The weightwise nonlinearity profile of $h'$ is
    also studied in~\cite{Carlet2016}. Compared $h'$ with the $16$-variable WPB function $f$ in Example~\ref{example. 16-variable WPB function}, we conclude that
    \begin{itemize}
      \item for WPB property, $f$ provides the best resistance against distinguish attack,
      \item for $k$-weight nonlinearity, $f$ performs better than $h'$ if $k<5$, and may be worse otherwise.
    \end{itemize}


\end{remark}

\section{Concluding remarks}\label{Sec:Conclusion}

In this paper, we propose a large family of WPB functions over $\F_{2^n}$, where $n$ is a power of $2$. These WPB functions have algebraic degree $n-1$, and are EA inequivalent to the known constructions. By employing the Krawtchouk polynomial, we give a method to calculate the weightwise nonlinearity of these functions, and also prove that the $k$-weight nonlinearity of these functions are always nonzero when $k$ is a positive power of $2$. Moreover, we construct a subclass of WPB functions in our family, which have high weightwise nonlinearity profile.
This is the first time that a class of Boolean functions achieving the best possible balancedness and high nonlinearity simultaneously  with input restricted to constant weight vectors has been exhibited. Our work is beneficial in finding proper filter functions for special symmetric primitives like FLIP.



\hspace{2em}


\begin{thebibliography}{99}

\small

\setlength{\baselineskip}{5pt}



\bibitem{Carlet2016}
{\rm{C. Carlet, P. M\'{e}aux, Y. Rotella, Boolean functions with restricted input and their robustness; application to the FLIP cipher,}} {\rm{Available at: http://eprint.iacr.org/2017/097}}

\bibitem{Carlet1}
{\rm{C. Carlet, Boolean functions for cryptography and error correcting
codes, In:}} {\it{Y. Crama, P. Hammer (eds.), Boolean Models and Methods in Mathematics,
Computer Science, and Engineering,}} {\rm{London: Cambridge University Press, pp.257--397, 2010.}}

\bibitem{Carlet2}
{\rm{C. Carlet, Vectorial Boolean functions for cryptography, In:
}} {\it{Y. Crama, P. Hammer (eds.), Boolean Models and Methods in Mathematics,
Computer Science, and Engineering,}} {\rm{London: Cambridge University Press, pp.398--469, 2010.}}

\bibitem{Daemen1995}
{\rm{J. Daemen,  Cipher and hash function design strategies based on linear and differential cryptanalysis,}} {\rm{PhD thesis, Catholic University of Louvain, 1995.}}

\bibitem{Duval2016}
{\rm{S. Duval, V. Lallemand, and Y. Rotella, Cryptanalysis of the FLIP family of stream ciphers, In:}} {\it{Advances in Cryptology---CRYPTO 2016,}} {\rm{Lecture
Notes in Computer Science, vol.9814, Berlin: Springer-Verlag, pp.457--475, 2016.}}

\bibitem{FilmusCJTCS2016}
{\rm{Y. Filmus, Friedgut-Kalai-Naor theorem for slices of the Boolean cube,}} {\it{Chicago Journal OF Theoretical Computer Science,}}
{\rm{vol.14, pp.1--17, 2016.}}


\bibitem{FilmusEJB2016}
{\rm{Y. Filmus, An orthogonal basis for functions over a slice of the Boolean hypercube,}} {\it{The Electronic Journal of Combinatorics,}}
{\rm{vol.23, no.1, pp.1--23, 2016.}}


\bibitem{FKMW16}
{\rm{Y. Filmus, G. Kindler, E. Mossel, and K. Wimmer, Invariance principle on the slice, In:}} {\it{31st Conference
on Computational Complexity, CCC 2016,}}
{\rm{pp.1--15, 2016.}}


\bibitem{FM16}
{\rm{Y. Filmus and E. Mossel, Harmonicity and invariance on slices of the Boolean cube, In:}} {\it{31st Conference
on Computational Complexity, CCC 2016,}}
{\rm{pp.1--16, 2016.}}


\bibitem{Hell2006}
{\rm{M. Hell, T. Johansson, A. Maximov, and W. Meier, A stream cipher proposal: Grain-128, In:}} {\it{2006 IEEE International Symposium on Information Theory, ISIT,}}
{\rm{pp.1614--1618, 2006.}}



\bibitem{JD06}
{\rm{A. Joux and P. Delaunay, Galois LFSR, embedded devices and side channel weaknesses, In:}} {\it{Progress in Cryptology---INDOCRYPT 2006}} {\rm{Lecture
Notes in Computer Science, vol.4329, Berlin: Springer-Verlag, pp.436--451, 2006.}}


\bibitem{MacWilliams}
{\rm{F. J. MacWilliams and N. J. A. Sloane,}} {\it{The Theory of Error-Correcting Codes,}} {\rm{Amsterdam: North-Holland Publishing Company, 1977.}}

\bibitem{Meaux2016}
{\rm{P. M\'{e}aux, A. Journault, F.-X. Standaert, C. Carlet, Towards stream ciphers for efficient FHE with low-noise ciphertexts, In:}} {\it{Advances in Cryptology---EUROCRYPT 2016,}} {\rm{Lecture
Notes in Computer Science, vol.9665, Berlin: Springer-Verlag, pp.311--343, 2016.}}

\bibitem{MesnagerFq2017}
{\rm{S. Mesnager, On the nonlinearity of Boolean functions with
restricted input.}} {\rm{Report on: Finite Field and Their Applications Fq 2017, Gaeta, Italy, 2017.}}

\end{thebibliography}
\end{document}